\documentclass[lettersize,journal]{IEEEtran}
\pdfoutput=1
\usepackage{amsmath,amsfonts}
\usepackage{algorithmic}
\usepackage{algorithm}
\usepackage{array}
\usepackage[caption=false,font=normalsize,labelfont=sf,textfont=sf]{subfig}
\usepackage{textcomp}
\usepackage{stfloats}
\usepackage{url}
\usepackage{verbatim}
\usepackage{amsthm}
\usepackage{graphicx}
\usepackage{cite}
\usepackage{xcolor}

\theoremstyle{plain}
\newtheorem{proposition}{Proposition}
\begin{document}

\title{Geometry-Aware Resource Allocation for Network-Level ISAC Systems}

\author{Xiao-Yang~Wang,~\IEEEmembership{Member,~IEEE},~Luting~Kong,~Lei~Cao,~Yang~Liu,~Jingheng~Zheng,~\IEEEmembership{Member,~IEEE},~Weiwen~Weng,~Weiyan~Chen, Wenzhi Li,~Kaitao Meng,~\IEEEmembership{Member,~IEEE}, ~Christos~Masouros,~\IEEEmembership{Fellow,~IEEE}
	
    \thanks{X. -Y. Wang, L. Kong, L. Cao, Y. Liu,  J. Zheng, W. Weng and W. Chen are with the Department of Wireless and Device Technology Research, China Mobile Research Institute, Beijing 100053, China (e-mail: wangxiaoyangyjy@chinamobile.com).}

    \thanks{W. Li is with Department of Science and Technology Innovation, China Mobile Communications Group Co., Ltd., Beijing 100033, China (e-mail: liwenzhijs@chinamobile.com).}

    \thanks{K. Meng is with the Department of Electrical and Electronic Engineering, University of Manchester, Manchester M13 9PL, UK (e-mail: kaitao.meng@manchester.ac.uk).}
    
	\thanks{C. Masouros is with the Department of Electronic and Electrical Engineering, University College London, London WC1E 7JE, UK (e-mail: c.masouros@ucl.ac.uk).}
}

\markboth{Submitted to IEEE Transactions on Wireless Communications, June~2026}%
{Shell \MakeLowercase{\textit{et al.}}: A Sample Article Using IEEEtran.cls for IEEE Journals}


\maketitle

\begin{abstract}
Network-level integrated sensing and communication (ISAC) is recognized as a transformative technology for next-generation mobile radio systems. By enabling collaboration among multiple transceivers, network-level ISAC can significantly enhance both communication and sensing performance through spatial diversity. However, existing resource allocation strategies typically overlook the impact of spatial geometry, where identical time-frequency resources contribute differently to sensing accuracy depending on the transceiver's location. This leaves the fundamental coupling between spatial topology and resource efficacy unclear, rendering optimal resource allocation a critical challenge for unlocking the full potential of network-level ISAC.
To address this challenge, this paper investigates the optimal distribution of time-frequency resources across spatially distributed transceivers through a theoretically grounded two-stage framework. First, we analytically derive the optimal time and frequency aperture distributions for sensing, defined as the variances of the allocated symbol and subcarrier indices, respectively, under both two-transmitter and multi-transmitter scenarios. By exploiting the mathematical isomorphism between delay and Doppler estimation, we prove that the optimal resource allocation strategy follows the gradient direction of the Cramér-Rao Lower Bound (CRLB) with respect to the apertures. Second, to bridge the gap between theoretical aperture values and practical OFDMA constraints, such as the minimized communication rate of each user equipment (UE), we formulate the resource allocation as a combinatorial integer partitioning problem. To tackle the NP-hard nature of the formulated problem, a low-complexity Variance-Guided Partitioning Algorithm (VGPA) is proposed to jointly optimize the subcarrier and symbol patterns for communication and sensing. Numerical results validate that the proposed scheme approaches the theoretical performance lower bound and can significantly improve sensing performance while satisfying the communication rate for each UE.
\end{abstract}

\begin{IEEEkeywords}
Network-level integrated sensing and communication (ISAC), cooperative sensing, resource allocation, Cramér-Rao Lower Bound (CRLB), OFDMA.
\end{IEEEkeywords}

\section{Introduction}
\IEEEPARstart{I}{ntegrated} sensing and communication (ISAC) is deemed a key enabling technology for next-generation mobile systems \cite{zhang2021overview,wei2022toward,10292797}. By sharing spectrum, algorithms, and hardware, ISAC systems can reduce the cost of mobile communication infrastructure \cite{8999605}, while achieving highly efficient sensing capabilities that support many promising applications, such as indoor intrusion detection \cite{11205178}, environmental reconstruction \cite{9534682,10103813}, and low-altitude UAV regulation \cite{11098638}. The ultimate goal of deploying ISAC in cellular networks is to enable distributed sensing of an unprecedented scale \cite{10726912}. However, most existing ISAC studies mainly focus on link-level systems design \cite{11184506}. Since practical mobile networks are inherently organized as ubiquitous distributed nodes, extending ISAC to a cooperative network level is an inevitable necessity to fully utilize this natural infrastructure. The failure to fully exploit this paradigm limits the development of large-scale perception applications \cite{meng2024,10438975}.

To enable distributed sensing over ISAC networks, network-level ISAC, also referred to as cooperative or distributed ISAC, has been proposed. In this paradigm, multiple distributed ISAC nodes collaboratively perform signal transmission, reception, and  processing to enhance  performance \cite{11184506}. The collaboration can span across multiple functions, such as synchronization \cite{10091198,wxy}, coordinated beamforming \cite{meng2024}, interference mitigation \cite{10735119,11075613}, and cooperative environment sensing \cite{10557620}. 
Regardless of the collaboration form, network-level ISAC inherently involves resource allocation and coordination across distributed nodes \cite{10614082}. For example, coordinated beamforming corresponds to spatial-domain collaboration \cite{meng2024network}, while interference mitigation requires joint optimization of time-, frequency-, and spatial-domain resources in current OFDMA systems \cite{10663785}. Moreover, unlike purely communication-centric networks where resources are allocated based on Channel State Information (CSI), ISAC networks must also consider the sensing geometry. The contribution of a specific node to the localization accuracy depends heavily on its spatial relationship with the target and the receiver. Consequently, resource distribution becomes a fundamental issue in network-level ISAC, as its design directly affects the sensing capability \cite{wang2024fundamentaltradeoffstimefrequencyresource}. 

While prior works have optimized pilot patterns and power allocation for single-node ISAC systems \cite{10288116, 8628347, 9062788}, the joint optimization of spatial topology and time-frequency domain resources in distributed networks remains unexplored. For example, the authors in \cite{10288116} optimize the frequency-domain distribution of the demodulation reference signal (DMRS) for narrowband systems, demonstrating that such optimization improves range sensing performance. However, this study considers only resource-block-group-level Type-0 DMRS distribution, one of three DMRS distribution schemes in the 5G standard \cite{3gpp.38.214}, and focuses solely on improving range estimation CRLB. Similarly, the authors of \cite{8628347} design a new frequency-domain pilot pattern, termed the stepped pattern, which extends the maximum perceptible range while maintaining a constant range estimation Cramér-Rao Lower Bound (CRLB). This design is further validated by hardware implementation in \cite{9062788}. Nevertheless, the pilot distribution in this study targets only the maximum perceptible range and does not address improvements in range CRLB. 

To further investigate the fundamental tradeoffs between the time- and frequency-domain resource distribution and sensing performance, the authors of \cite{wang2024fundamentaltradeoffstimefrequencyresource} jointly optimize time- and frequency-domain resource allocation for single-user equipment (UE) OFDMA and multi-UE OFDMA systems (namely network-level ISAC systems), respectively. They identify the best and worst distribution schemes for both scenarios. However, these studies do not account for the spatial distribution of transmitters, which is known to significantly influence positioning performance in wireless sensor networks \cite{5466526,236507}. 

Despite these advances, no existing work jointly considers time-domain, frequency-domain, and spatial distribution of transceivers, ignoring that transceivers in favorable geometric positions should theoretically be allocated ``wider" effective bandwidths or time durations (apertures) to maximize sensing, while satisfying constraints on communication rate.
To fully exploit the sensing potential of OFDMA-based network-level ISAC, this paper investigates the fundamental coupling between spatial transceiver distribution and time-frequency resource allocation. We propose a systematic optimization framework that bridges the gap between sensing geometry and resource distribution with communication constraints. Specifically, we first identify the ``frequency/time aperture'' as the critical metric determining sensing accuracy. Based on this metric, we derive theoretical optimal distributions and subsequently design practical allocation algorithms. The main contributions of this work are summarized as follows:

\begin{itemize}
    \item {Theoretical framework via geometric water-filling.} We establish a closed-form solution for the optimal aperture allocation in two-transmitter OFDMA systems. For general multi-transmitter networks, utilizing asymptotic analysis, we reveal a fundamental \textit{``geometric water-filling"} principle: akin to power allocation in communications-only cooperation, frequency/time resources should be preferentially ``poured" into transceivers with higher ``geometric gains" (i.e., CRLB sensitivity), while nodes with poor sensing geometry are assigned minimal resources.

    \item {Unified time-frequency analysis.} We explicitly define ``frequency aperture" and ``time aperture" and exploit the mathematical isomorphism between delay and Doppler estimation. This allows us to unify the resource optimization for both ranging and velocity sensing under a single mathematical framework, significantly generalizing the applicability of our proposed method.

    \item {Efficient aperture allocation algorithm design.} To bridge the gap between the theoretical optimal aperture (a continuous value) and practical constraints, such as the required communication rate, we formulate the subcarrier/symbol allocation as a variance-constrained integer partitioning problem. Since this problem is NP-hard, we develop a low-complexity Variance-Guided Partitioning Algorithm (VGPA). Numerical results demonstrate that the VGPA effectively approaches the theoretical performance lower bound in two-UE scenarios and significantly improve sensing performance with required communication rates.
\end{itemize}

The rest of this paper is organized as follows. In Section II, we present the system model for OFDMA network-level ISAC systems. In Section III, we optimize the sensing performance in terms of the frequency aperture for both two-UE scenarios and multi-UE scenarios. Section IV formulates the subcarrier allocation problem and presents the corresponding algorithm to allocate subcarriers according to the prescribed frequency apertures. Numerical results are presented in Section V to corroborate our analysis. Finally, conclusions are drawn in Section VI.

\textit{Notations}: ${\bf A}^{\textrm{T}},{\bf A}^{*}, {\bf A}^{\textrm{H}}$, and ${\bf A}^{-1}$ represent the transpose, conjugate, conjugate transpose, and inverse of the matrix ${\bf A}$, respectively. ${\bf a}[m]$ and ${\bf A}[m,n]$ are the $m$th element of the vector ${\bf a}$ and the $(m,n)$th element of the {matrix} ${\bf A}$, respectively; ${a}(t)$,  ${\bf a}(t)$, and ${\bf A}(t)$ are the scalar function, the vector function, and the matrix function with respect to $t$, respectively; ${a}({\bf t})$ is the scalar function with regard to vector ${\bf t}$; $\mathbb{E}[\cdot]$ and $\operatorname{Tr}(\cdot)$ denote the expectation and trace operator, respectively; {while $j$  denotes the imaginary unit; Moreover, $|\cdot|$ represents the number of elements in a set, the modulus of a vector, or the absolute value of a scalar. $\mathbb{Z}$, $\mathbb{Z}^{+}$ and $\mathbb{R}$ are defined as the set of integer, the non-negative integer and real number, respectively; $\cap$ and $\cup$ are the set intersection and union operators, respectively; Finally, $\nabla$ and $\partial $ represent the gradient and the Partial derivative of a function, respectively.

\section{System Model}

\begin{figure}[tbp]
	\centering
	\includegraphics[width=7cm]{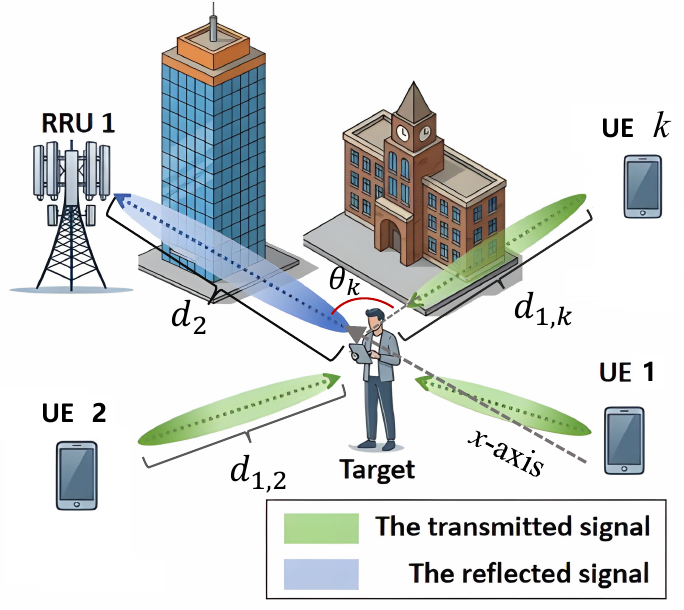}
	\caption{System model.}
	\label{figure1}
\end{figure}

Consider an uplink MIMO-OFDMA ISAC system, as depicted in Fig. \ref{figure1}. The $M_{\textrm{R}}$-antenna RRU receives signals transmitted by $K$ UEs, each equipped with $M_{\textrm{T}}$ antennas. All $K$ UEs cooperatively illuminate a common target at coordinates ${\bf p}=[p_x,p_y]^\mathrm{T}$ and the RRU estimate the target parameters in a multi-static manner.
Our primary objective is to optimize the sensing accuracy of this target by strategically allocating time-frequency resources among the distributed UEs based on their spatial geometry. In order to fully demonstrate the role of geometric topology, we focus here primarily on passive sensing in network-level ISAC.
Specifically, let  $d_{1,k}$ denote the distance between the $k$th UE and the target, and $d_2$ denote the distance between the target and the RRU. The propagation delay from the $k$th UE to the target and then to the RRU is denoted as $\tau_k$. The RRU can estimate the target position using the received uplink signals by firstly demodulating data and then processing sensing information, as described in \cite{rahman2019framework,zhang2021enabling}, or by joint data demodulation and target sensing, as depicted in \cite{valiulahi,10918629}. 
Let $f_\textrm{c}$, $N$, $\Delta f$, and $T_\textrm{sym}$ denote the carrier frequency, total number of subcarriers, subcarrier spacing, and OFDM symbol duration, respectively. Each UE is assigned a subset of frequency subcarrier indices given by the set $\mathcal{S}_k$ with cardinality $N_k$ and a subset of symbol time-slot indices given by $\boldsymbol{\mathcal{T}}_k$. These sets $\mathcal{S}_k$ and $\boldsymbol{\mathcal{T}}_k$ are the key optimization variables in our study, representing the frequency and time resources allocated to the $k$-th UE, respectively. Moreover, the communication data symbols modulated on these subcarriers in the $g$th OFDM symbol are denoted as ${\bf s}_{g,k}\in\mathbb{C}^{1\times N_k}$. 

Therefore, the equivalent baseband signal in $g$th OFDM symbol transmitted by the $k$th UE can be expressed as
\begin{equation}\label{x_k(t)}
{\bf x}_{g,k}(t)=e^{j2\pi f_ct}{{\bf w}_k}\sum\nolimits_{{n \in \mathcal{S}_k}}{\bf s}_{g,k}[n]e^{j2\pi {n}\Delta ft}.
\end{equation} where ${\bf w}_k\in\mathbb{C}^{M_{\textrm{T}}\times 1}$ denotes the beamforming vector of the $k$th UE. Without loss of generality, we assume that ${\bf w}_k$ is designed to align with the target direction (e.g., via maximum ratio transmission) to maximize the effective sensing gain. This allows us to focus on the optimization of time-frequency resources while communication proceeds simultaneously.

Accordingly, the received \textit{time-domain} equivalent baseband signal corresponding to the $g$th symbol can be expressed as
\begin{equation}
\begin{aligned}
\mathbf{y}_{g}(t)&=\sum_{k=1}^K\sigma{d}_{1,k}^{-\frac\beta2}{d}_2^{-\frac\beta2}\mathbf{a}_\textrm{R}\left(\theta_0\right)\mathbf{a}_\textrm{T}^{\rm H}\left(\theta_{k}\right){\bf x}_{g,k}(t-\tau_k)+\mathbf{w}(t)\\
&=\sum_{k=1}^K\sum_{{n \in \mathcal{S}_k}}e^{j2\pi f_c(t-\tau_k)}\sigma{d}_{1,k}^{-\frac\beta2}{d}_2^{-\frac\beta2}\mathbf{a}_\textrm{R}\left(\theta_0\right)\mathbf{a}_\textrm{T}^H\!\!\left(\theta_{k}\right)\!{{\bf w}_k}\\
&\ \ \ \cdot{\bf s}_{g,k}[n]e^{j2\pi {n}\Delta f(t-\tau_k)}\!+\!\mathbf{w}(t),
\end{aligned}
\end{equation}
where $\theta_{k}$ denotes the angle formed by the link from the $k$th UE to the target and the reflected link from the target to the RRU, which overlaps with the defined x-axis, and $\theta_0=0$ is the angle formed by the link from the target to the RRU and the x-axis, while ${\bf a}_\textrm{R}(\cdot)\in \mathbb{C}^{M_\textrm{R}\times 1}$ and ${\bf a}_\textrm{T}^{\rm H}(\cdot)\in \mathbb{C}^{1\times M_\textrm{T}}$ are the receiving and transmitting steering vectors, respectively. $r_\textrm{c}$ and $\beta$ represent the radar cross section
(RCS) of the target and the path-loss exponent of the environment, respectively. $\mathbf{w}(t)$ denotes the additive white Gaussian noise (AWGN) satisfying $\mathbb{E}[\mathbf{w}^2(t)]=z_n^2$.

\begin{figure}[tbp]
	\centering
	\includegraphics[width=8.4cm]{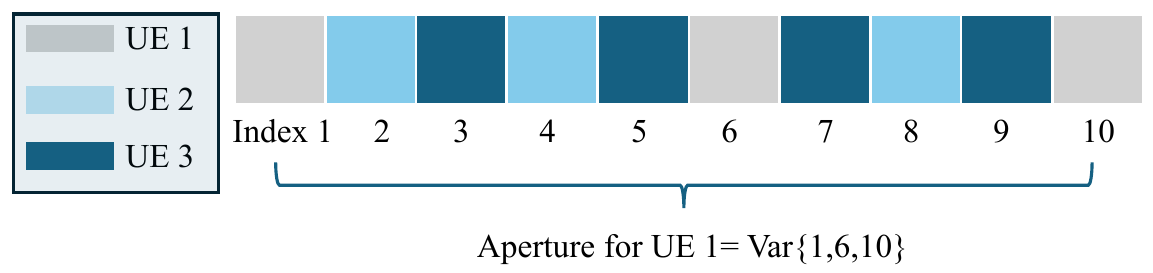}
	\caption{Frequency aperture computation for UE 1. The colorful boxes represent the subcarrier.}
	\label{figure2new}
\end{figure}

In the ISAC receiver, the transmitted data symbols are first recovered via standard demodulation and decoding processes. These recovered symbols then serve as known reference signals for the sensing module for simplicity \cite{rahman2019framework,10640151}. Then, the target location parameter $(p_x,p_y)$ can be estimated using the super-resolution algorithm or the maximum likelihood estimation method \cite{5466526}. To characterize the fundamental sensing performance limit, the Fisher information matrix (FIM) for estimating ${\bf p}$ is given by \cite{meng2024,9364752} 
\begin{equation}\label{equ3}
\mathbf{F}_N=\rho \sum_{k=1}^K {d}_{1,k}^{-\beta}{d}_2^{-\beta} \sigma_{k}^2 \left[\begin{array}{cc}
a_{k}^2 & a_{k} b_{k} \\
a_{k} b_{k} & b_{k}^2
\end{array}\right]
\end{equation} where $\sigma_{k}^2$ denotes the variance of the elements in $\mathcal{S}_k$, and $\rho$ is given by
\begin{equation}
\rho=\frac{G_t G_r \Delta f^2 r_\textrm{c}}{8 \pi f_c^2 z_n^2},
\end{equation} where $G_t$ and $G_r$ denote the transmit and receive beamforming gains, respectively. Moreover, $a_{k}$ and $b_{k}$ are expressed as \cite{meng2024}
%
\begin{equation}
\begin{aligned}
a_{k} & =\cos \theta_{k}+\cos \theta_0, \\
b_{k} & =\sin \theta_{k}+\sin \theta_0.
\end{aligned}
\end{equation}

For simplicity, by assuming $\theta_0=0$, the CRLB can be given as {$\textrm{CRLB}=\operatorname{Tr}(\mathbf{F}_N^{-1})$}, as shown in (\ref{equ7}), where {$v_{k_1}={d}_{1,k}^{-\beta}{d}_2^{-\beta} \sigma_{k_1}^2$}.
\begin{figure*}
\begin{equation}\label{equ7}
\begin{aligned}
\textrm{CRLB}=\frac{\sum_{k=1}^K v_k(1+\cos(\theta_k))^2+\sum_{k=1}^K  v_k\sin^2(\theta_k)}{\rho\left(\sum_{k=1}^K v_k(1+\cos(\theta_k))^2\right)\left(\sum_{k=1}^Kv_k \sin^2(\theta_k)\right)-\left(\sum_{k=1}^K v_k(1+\cos(\theta_k)) \sin(\theta_k)\right)^2}.
\end{aligned}
\end{equation}
\end{figure*}

Beyond location, velocity estimation accuracy depends on time-domain spread \cite{wang2024fundamentaltradeoffstimefrequencyresource}. Recognizing that the FIM components scale with the variance of resource indices \cite{wang2024fundamentaltradeoffstimefrequencyresource}, we unifiedly define the \textit{frequency aperture} $\mathcal{A}^\mathrm{freq}_k$ and \textit{time aperture} $\mathcal{A}^\mathrm{time}_k$ for the $k$-th UE as $\mathcal{A}^\mathrm{freq}_k = \text{Var}(\mathcal{S}_k)$ and $\mathcal{A}^\mathrm{time}_k = \text{Var}(\boldsymbol {\mathcal{T}}_k)$, respectively. For ease of understanding, we show the definition of frequency aperture in Fig. \ref{figure2new}.

{\bf Remark}:
Leveraging the \textit{time-frequency duality} of OFDM \cite{liu2020super}, delay and Doppler estimation are \textit{mathematically isomorphic}. Thus, velocity optimization is structurally identical to ranging. Consequently, our subsequent derivation focuses on the frequency domain, yet the resulting analysis and algorithms apply universally to the time domain. 

\section{Subcarrier Distribution Optimization}

In this section, we investigate the optimal allocation of frequency aperture resources to minimize the sensing CRLB. We begin by introducing the overall theoretical solution framework in Subsection~A. Next, for frequency aperture optimization, we investigate the special case of two transmitters in Subsection~B to reveal fundamental geometric properties and establish geometric intuition. Building on these insights, we generalize the analysis to multi-transmitter systems using asymptotic analysis in Subsection~C. Finally, the practical subcarrier allocation corresponding to the optimal frequency aperture with constraints on communication rate for each UE via discrete integer partitioning is addressed in Section IV.

\subsection{Problem Analysis and Solution Framework}
For simplicity, this subsection outlines the core solution framework. Specifically, the optimization problem formulated in this Subsection (\ref{equ10-1}) presents two fundamental challenges:
\begin{itemize}
    \item {Combinatorial complexity:} the optimization variables are discrete sets of subcarrier indices $\{\mathcal{S}_k\}$, making the problem a non-convex integer programming problem which is generally NP-hard.
    \item {Matrix coupling:} the objective function $\operatorname{Tr}(\mathbf{F}_N^{-1})$ involves the inversion of a sum of matrices. For the multi-transmitter case ($K>2$), the contributions of individual UEs are coupled inside the inverse operation, preventing the derivation of a straightforward closed-form solution.
\end{itemize}

To overcome these hurdles, we propose a {two-stage theoretical framework}, decoupling the determination of optimal aperture sizes from the discrete subcarrier assignment:

\subsubsection*{Stage 1: Continuous Aperture Optimization (Section \uppercase\expandafter{\romannumeral3})}
We first relax the discrete integer constraints and treat the aperture variance $\sigma_k^2$ as a continuous variable. To reveal the relationship between spatial geometry and optimal variance, we adopt a {progressive analytical strategy}:
\begin{itemize}
    \item {Baseline insight (Subsection III-B):} We start with the two-transmitter case ($K=2$), where the matrix inversion can be explicitly expanded. This yields a closed-form solution that provides intuitive ``water-filling'' insights.
    \item {Asymptotic generalization (Subsection III-C):} For the general multi-transmitter case ($K>2$), we utilize asymptotic analysis based on the Law of Large Numbers. By deriving the gradient of the expected CRLB, we establish the \textit{geometric water-filling (GWF)} principle for large-scale networks.
\end{itemize}

\subsubsection*{Stage 2: Discrete Integer Partitioning (Section IV)}
Guided by the theoretical optimal variances derived in Stage 1, we then address the integer constraints by designing a low-complexity algorithm (VGPA) to map continuous variances to discrete subcarrier sets.

\subsection{Two-Transmitter Scenarios}

For the two-transmitter case ($K=2$), the FIM $\mathbf{F}_N$ in (\ref{equ3}) can be expressed as
\begin{equation}\label{equ3-1}
\mathbf{F}_N = \rho \left( v_1 \begin{bmatrix} a_1^2 & a_1 b_1 \\ a_1 b_1 & b_1^2 \end{bmatrix} + v_2 \begin{bmatrix} a_2^2 & a_2 b_2 \\ a_2 b_2 & b_2^2 \end{bmatrix} \right).
\end{equation}

Then, the CRLB, defined as $\operatorname{Tr}(\mathbf{F}_N^{-1})$, can be calculated as
\begin{equation}
\operatorname{Tr}(\mathbf{F}_N^{-1}) = \frac{\operatorname{Tr}(\mathbf{F}_N)}{\det(\mathbf{F}_N)}.
\end{equation}

By defining $L_k = a_k^2 + b_k^2 = 2(1+\cos\theta_k)$, the trace of $\mathbf{F}_N$ is given by $\rho(v_1 L_1 + v_2 L_2)$. The determinant is calculated as
\begin{equation}
\det(\mathbf{F}_N) = \rho^2 v_1 v_2 (a_1 b_2 - a_2 b_1)^2.
\end{equation}
Then, we define $\Delta_{1,2}$ as
\begin{equation}
\Delta_{1,2} = (a_1 b_2 - a_2 b_1)^2 = [\sin(\theta_2) - \sin(\theta_1) + \sin(\theta_2-\theta_1)]^2.
\end{equation}

Consequently, the CRLB can be simplified as
\begin{equation}\label{equ8-3}
\begin{aligned}
\operatorname{Tr}(\mathbf{F}_N^{-1}) &= \frac{\rho(v_1 L_1 + v_2 L_2)}{\rho^2 v_1 v_2 \Delta_{1,2}} \\
&= \frac{1}{\rho \Delta_{1,2}} \left( \frac{L_1}{v_2} + \frac{L_2}{v_1} \right).
\end{aligned}
\end{equation}
(\ref{equ8-3}) reveals a fundamental cooperative coupling effect: the contribution of the $1$st UE's geometric quality $L_1$ is scaled by the resource intensity of the $2$nd UE $v_2$, and vice versa. This implies that to fully exploit the superior angle of one node, the other node must provide sufficient signal quality to resolve the ambiguity.

To optimize the two-dimensional (2D) location accuracy, $\textrm{Tr}(\mathbf{F}_N^{-1})$ should be minimized. Moreover, according to \cite{wang2024fundamentaltradeoffstimefrequencyresource}, ${\sigma}_{1}^2$ and ${\sigma}_{2}^2$ are required to satisfy 
\begin{equation}
N_1\sigma_{1}^2+N_2\sigma_{2}^2<\sum_{n=1}^{N}(n-\frac{1}{N}\sum_{m=1}^{N}m)^2.
\end{equation} Consequently, the optimization problem is then formulated as
\begin{equation}\label{equ10-1}
\begin{aligned}
\mathop{\min}_{{{\mathcal{S}_1}, {\mathcal{S}_2}}}\ \
& \frac{L_1}{d_{1,2}^{-\beta}d_2^{-\beta}\sigma_2^2} + \frac{L_2}{d_{1,1}^{-\beta}d_2^{-\beta}\sigma_1^2}\\
\textrm{s.t.}\ \ &N_1{\sigma_{1}^2}+N_2{\sigma_{2}^2} \le E_{\text{total}},\\
&{\sigma_{k}^2}=\frac{1}{N_k}\sum_{{n\in\mathcal{S}_k}}({n}-\frac{1}{N_k}\sum_{{m\in\mathcal{S}_k}}{m})^2, {k=1,2}\\
&{\mathcal{S}_k}\subset \mathcal{N}, |{\mathcal{S}_k}|=N_k, {k=1,2}\\
&N_1, N_2 \in \mathbb{Z^{+}}.
\end{aligned}
\end{equation}
where $E_{\text{total}} = \sum_{n=1}^{N}(n-\frac{1}{N}\sum_{m=1}^{N}m)^2$ is the total variance budget. As shown in (\ref{equ10-1}), this problem falls into the category of integer programming \cite{wang2024fundamentaltradeoffstimefrequencyresource}, which is inherently difficult to solve optimally. To tackle this issue, we first optimize the objective with respect to $\sigma_{1}^2$ and $\sigma_{1}^2$, and then design the corresponding subcarrier distributions based on the obtained optimal values. Specifically, the first part is solved in this subsection, while the design of the corresponding subcarrier distributions is presented in Section \uppercase\expandafter{\romannumeral4}, which also optimizes the subcarrier distribution scheme for multi-transmitter scenarios.

Firstly, the problem with regard to $\sigma_{1}^2$ and $\sigma_{2}^2$ can be formulated as
\begin{equation}\label{equ10-2}
\begin{aligned}
\mathop{\min}_{{\sigma_{1}^2}, {\sigma_{2}^2}}\ & {\operatorname{Tr}(\mathbf{F}_N^{-1})}\\
\textrm{s.t.}\ \ &N_1{\sigma_{1}^2}+N_2{\sigma_{2}^2}\le\sum_{n=1}^{N}(n-\frac{1}{N}\sum_{m=1}^{N}m)^2,\\
&{\sigma_{1}^2}\in\mathbb{R}, {\sigma_{2}^2}\in\mathbb{R},\\
&N_1, N_2 \in \mathbb{Z^{+}}.
\end{aligned}
\end{equation}
Obviously, the objective is convex with regard to $\sigma_{1}^2$ and $\sigma_{2}^2$ according to \cite{boyd2004convex}. Therefore, the optimal solution can be obtained using the Lagrange multiplier method \cite{bertsekas2014constrained}. Specifically, we have the following proposition.

\begin{proposition}\label{prop1}
Given the resource constraint $E_{\text{total}}$, the optimal frequency apertures $\sigma_1^2$ and $\sigma_2^2$ that minimize the CRLB are given by
\begin{align}
\sigma_1^2 &= \frac{E_{\text{total}}}{N_1 + \sqrt{\frac{N_1 N_2 \Gamma_2}{\Gamma_1}}}, \label{opt_sig1}\\
\sigma_2^2 &= \frac{E_{\text{total}}}{N_2 + \sqrt{\frac{N_1 N_2 \Gamma_1}{\Gamma_2}}}, \label{opt_sig2}
\end{align}
where $\Gamma_1 = \frac{L_2}{d_{1,1}^{-\beta}d_2^{-\beta}}$ and $\Gamma_2 = \frac{L_1}{d_{1,2}^{-\beta}d_2^{-\beta}}$ represent the cross-coupled geometric-channel gains.
\end{proposition}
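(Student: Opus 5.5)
The plan is to treat the continuous relaxation (\ref{equ10-2}) with the objective written as in (\ref{equ10-1}). Setting $x_k=\sigma_k^2>0$, the problem becomes
\begin{equation*}
\min_{x_1,x_2>0}\ \frac{\Gamma_2}{x_2}+\frac{\Gamma_1}{x_1}\quad \textrm{s.t.}\quad N_1x_1+N_2x_2\le E_{\text{total}}.
\end{equation*}
Here the positive factor $1/(\rho\Delta_{1,2})$ from (\ref{equ8-3}) is dropped because it does not affect the minimizer, and $\Gamma_1,\Gamma_2$ are as defined in the statement.

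First, I would argue that the constraint is active. Each term $\Gamma_k/x_k$ is strictly decreasing in $x_k$. If $N_1x_1+N_2x_2<E_{\text{total}}$, we could enlarge one variable and strictly lower the objective. Hence at the optimum $N_1x_1+N_2x_2=E_{\text{total}}$.

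Second, the objective is strictly convex on the positive orthant, since $\Gamma_k/x_k$ has second derivative $2\Gamma_k/x_k^3>0$, and the feasible set is convex. The KKT conditions are therefore necessary and sufficient, and the minimizer is unique.

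Third, I would form the Lagrangian
\begin{equation*}
\mathcal{L}=\frac{\Gamma_1}{x_1}+\frac{\Gamma_2}{x_2}+\lambda\,(N_1x_1+N_2x_2-E_{\text{total}}).
\end{equation*}
Setting the partial derivatives to zero gives $\Gamma_k/x_k^2=\lambda N_k$, so $x_k=\sqrt{\Gamma_k/(\lambda N_k)}$. Taking the ratio eliminates $\lambda$:
\begin{equation*}
x_2=x_1\sqrt{\frac{\Gamma_2N_1}{\Gamma_1N_2}}.
\end{equation*}
Substituting into the active constraint gives $x_1\bigl(N_1+N_2\sqrt{\Gamma_2N_1/(\Gamma_1N_2)}\bigr)=E_{\text{total}}$. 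Since $N_2\sqrt{\Gamma_2N_1/(\Gamma_1N_2)}=\sqrt{N_1N_2\Gamma_2/\Gamma_1}$, this is (\ref{opt_sig1}). By symmetry, or by back-substitution, we obtain (\ref{opt_sig2}).

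The main obstacle is the bookkeeping between the two indexings. Because of the cross-coupling in (\ref{equ8-3}), the term containing $\sigma_1^2$ carries the coefficient $L_2/(d_{1,1}^{-\beta}d_2^{-\beta})=\Gamma_1$, and the term containing $\sigma_2^2$ carries $\Gamma_2$. I would therefore check carefully that $\Gamma_k$ is indeed paired with $x_k$ before differentiating; otherwise the square-root ratios in (\ref{opt_sig1})--(\ref{opt_sig2}) come out inverted.

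I would also remark on two points. First, the strict inequality $N_1\sigma_1^2+N_2\sigma_2^2<E_{\text{total}}$ stated earlier is relaxed here to $\le$, which is the form in (\ref{equ10-2}); without this the minimum is not attained and the formulas describe the infimum. Second, positivity of $x_k$ is automatic, because $\Gamma_k>0$ whenever $\theta_k\neq\pi$.
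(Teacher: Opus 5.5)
Your proposal is correct and follows essentially the same route as the paper's Appendix A: form the Lagrangian on the equality-constrained problem, eliminate $\lambda$ through the ratio $\sigma_1^2/\sigma_2^2$, and substitute back. You also add an explicit active-constraint argument and convexity-based sufficiency, which the paper only asserts. One small index slip: $\Gamma_1 \propto L_2 = 2(1+\cos\theta_2)$, so positivity of $\Gamma_k$ requires the \emph{partner's} angle to differ from $\pi$, not $\theta_k$ itself.
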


\textit{Proof:} See Appendix A.

\textbf{Remark:} Proposition \ref{prop1} demonstrates a \textit{geometric water-filling} strategy. Unlike traditional communication-only water-filling where resource ratios are determined by decoupled communication-channel qualities, here $\sigma_1^2$ scales with $\Gamma_1$, which explicitly incorporates $L_2$ (the angular quality of the partner UE 2). This confirms that in network-level ISAC, resources are allocated to a node not merely to enhance its local signal strength, but to cross-leverage the geometric potential of the multi-static transmitter.

\subsection{Generalization to Multi-Transmitter Networks}

For general multi-transmitter networks ($K>2$), deriving a closed-form solution is mathematically intractable due to the non-linear matrix inversion in the objective $\operatorname{Tr}((\sum_{k=1}^{K} \mathbf{F}_k)^{-1})$, which becomes significantly more challenging. Unlike the two-transmitter case, this inversion creates a complex coupling among all user resources, preventing variable decoupling. Consequently, the marginal gain of any single user depends on the entire network configuration. To overcome this hurdle, we adopt a gradient-based asymptotic analysis approach. Our solution logic proceeds as follows: first, we derive the asymptotic expression of the CRLB's partial derivative with respect to the frequency aperture,  for large-scale networks using the law of large numbers; second, based on this explicit gradient, we mathematically prove the monotonicity of the CRLB and identify the direction of steepest descent. Then, aligning with this gradient direction, we establish the Geometric Water-Filling resource allocation strategy.

\subsubsection{Asymptotic Gradient of the CRLB}
For formula simplicity, we first define $\mathbf{v} = [v_1, \dots, v_K]^T$, where the $k$-th element is $v_k = {d_{1,k}^{-\beta}d_2^{-\beta}} \cdot \sigma_k^2$. Unlike traditional communication water-filling strategies that allocate resources based solely on channel gain (path loss), sensing performance relies on the coupling of signal strength and geometric spread. Here, $v_k$ integrates the physical channel condition with the tunable geometric resource (aperture), serving as the core variable to derive a unified gradient that balances both channel depth and angular diversity. Then, the CRLB in (\ref{equ7}) can be expressed as
\begin{equation}
\textrm{CRLB}=\frac{g_1({\bf v})}{\rho g_2 ({\bf v})},
\end{equation}
where $g_1({\bf v})$ and $g_2 ({\bf v})$ are defined as
\begin{equation}\label{equ8}
g_1({\bf v})=\sum_{k=1}^K 2\left(1+\cos \theta_k\right) v_k,
\end{equation}
\begin{equation}\label{equ9}
\begin{aligned}
& g_2 ({\bf v})=\sum_{k=1}^K v_k a_k^2 \sum_{k=1}^K v_k b_k^2-\left(\sum_{k=1}^K v_k a_k b_k\right)^2 \\
&= \sum_{k=1}^K \sum_{k^{'}=1}^K\left[a_{k}^2 b_{k^{'}}^2-a_k a_{k^{'}} b_k b_{k^{'}}\right] v_k v_k^{\prime} \\
& =\sum_{k=1}^K \sum_{k^{'}=1}^K f\left(k, k^{\prime}\right)\left[f\left(k, k^{\prime}\right)-f\left(k^{\prime}, k\right)\right] v_k v_{k^{\prime}}.
&  \\
\end{aligned}
\end{equation}
In (\ref{equ9}), $f\left(k, k^{\prime}\right)$ are defined as
\begin{equation}
\begin{aligned}
f\left(k, k^{\prime}\right)=\left(1+\cos \theta_k\right) \sin \theta_{k^{\prime}}.
\end{aligned}
\end{equation}

Hence, the partial derivatives of CRLB with regard to $v_k$ can be expressed as
\begin{equation}
\begin{aligned}
\frac{\partial~ \textrm{CRLB}}{\partial v_k} =\frac{\frac{\partial g_1({\bf v})}{\partial v_{k_1}} g_2({\bf v})-g_1({\bf v}) \frac{\partial g_2({\bf v})}{\partial v_{k_1}}}{\rho g_2^2({\bf v})}.
\end{aligned}
\end{equation}
Furthermore, according to (\ref{equ8}) and (\ref{equ9}), we have
\begin{equation}\label{equ12}
\frac{\partial g_1({\bf v})}{\partial v_{k_ 1}}=2\left(1+\cos \theta_{k_ 1}\right),
\end{equation}
\begin{equation}\label{equ13}
\frac{\partial g_2({\bf v})}{\partial v_{k_ 1}}=\sum_{k=1}^K v_k \left[f\left(k, k_1\right)-f\left(k_1, k\right)\right]^2.
\end{equation}
Then, according to (\ref{equ8}), (\ref{equ9}), (\ref{equ12}), and (\ref{equ13}), $g_1({\bf v}) \frac{\partial {g}_2({\bf v})}{\partial v_{k_ 1}}$ and $\frac{\partial g_1({\bf v})}{\partial v_{k_1}} g_2({\bf v})$ can be expressed as
\begin{equation}
\begin{aligned}
&g_1({\bf v}) \frac{\partial {g}_2({\bf v})}{\partial v_{k_ 1}}\\
&=\sum _ { k = 1 } ^ { K } 2 ( 1 + \operatorname { c o s } \theta_k) v_{ k } \sum_{k=1}^K v_k\left[f\left(k, k_1\right)-f\left(k_1, k\right)\right]^2\\
&=\sum_{k=1}^K \sum_{k^{'}=1}^K 2\left(1+\cos \theta_k\right)\left[f\left(k^{\prime}, k_1\right)-f\left(k_1, k^{\prime}\right)\right]^2 v_k v_{k^{\prime}},
\end{aligned}
\end{equation}
\begin{equation}
\begin{aligned}
&\frac{\partial g_1({\bf v})}{\partial v_{k_1}} g_2({\bf v})\\
&=\sum_{k=1}^K \sum_{k^{\prime}=1}^K 2\left(1+\operatorname{cos}\theta_{k_1}\right) f\left(k, k^{\prime}\right)\left[f\left(k, k^{\prime}\right)-f\left(k^{\prime}, k\right)\right] v_k v_{k^{'}}.
\end{aligned}
\end{equation}

As a result, $\mathbb{E}[\frac{\partial~ \textrm{CRLB}}{\partial v_k}]$ can be further formulated as
\begin{equation}
{\mathbb{E}[\frac{\partial~ \textrm{CRLB}}{\partial v_k}]=\frac{\mathbb{E}[\frac{\partial g_1({\bf v})}{\partial v_{k_1}} g_2({\bf v})]-\mathbb{E}[g_1({\bf v}) \frac{\partial {g}_2({\bf v})}{\partial v_{k_ 1}}]}{\rho\mathbb{E}[g_2({\bf v})^2]}.}
\end{equation}
According to the Chebyshev's law of large numbers \cite{revesz2014laws}, when the number of UEs is large, $\mathbb{E}[\frac{\partial g_1({\bf v})}{\partial v_{k_1}} g_2({\bf v})]$ and $\mathbb{E}[\frac{\partial g_1({\bf v})}{\partial v_{k_1}} g_2({\bf v})]$ can be approximately formulated as
\begin{equation}\label{equ17}
\begin{aligned}
&{\mathbb{E}[\frac{\partial g_1({\bf v})}{\partial v_{k_1}} g_2({\bf v})]\approx} \sum_{k=1}^K\sum_{k^{\prime}=1}^K2\left(1+\operatorname{cos}\theta_{k_1}\right) \\
&\cdot {\mathbb{E}[f\left(k, k^{\prime}\right)\left[f\left(k, k^{\prime}\right)-f\left(k^{\prime}, k\right)\right]]\mathbb{E}[ v_k v_{k^{'}}],}
\end{aligned}
\end{equation}
\begin{equation}\label{equ18}
\begin{aligned}
&{\mathbb{E}[g_1({\bf v}) \frac{\partial {g}_2({\bf v})}{\partial v_{k_ 1}}]} \\
&=\!\sum_{k=1}^K \sum_{k^{'}=1}^K 2{\mathbb{E}[\left(1+\cos \theta_k\right)\left[f\left(k^{\prime}, k_1\right)\!-\!f\left(k_1, k^{\prime}\right)\right]^2] \mathbb{E}[v_k v_{k^{\prime}}]}\\
&=\!\sum_{k=1}^K \sum_{k^{'}=1}^K2{\mathbb{E}[1+\cos \theta_k]}\{\sin ^2 \theta_{k_1} {\mathbb{E}[(1+\cos \theta_{k^{'}})^2]}\\
&\ \ \ +(1+\cos\theta_{k_1})^2 {\mathbb{E}[\sin ^2 \theta_{k^{'}}]-2 \sin \theta_{k_1} (1+\cos\theta_{k_1})}\\
&\ \ \ \cdot {\mathbb{E}[(1+\cos \theta_{k^{'}}) \sin \theta_{k^{'}}]\}\mathbb{E}[v_k v_{k^{\prime}}]}.
\end{aligned}
\end{equation}

In specific environments, transmitters typically maintain relatively constant angle and distance distributions over extended periods. Assuming that the joint probability density function (PDF) of the UEs' angles and distances is given by $p(x,y)$, (\ref{equ17}) and (\ref{equ18}) can be further simplified into (\ref{equ191}) and (\ref{equ201}), respectively. Consequently, {$\mathbb{E}[\frac{\partial~ \textrm{CRLB}}{\partial v_k}]$} can be expressed as in (\ref{equ212}). According to this expression, when {$\theta_{k_1}=\pi$}, {$\mathbb{E}[\frac{\partial~ \textrm{CRLB}}{\partial v_k}]$} equals zero, indicating that the corresponding transmitter contributes negligibly to target localization. Therefore, the frequency aperture assigned to such a transmitter should be minimized.

{\bf Remark}: 
In this asymptotic regime, the integral term enclosed in curly braces can be viewed as a constant determined solely by $\theta_{k_1}$. This means that the relative contribution of each UE to the CRLB remains fixed once the angular and distance distributions are given. Consequently, the ratio between CRLB components along any two dimensions becomes constant, leading to an invariant gradient direction of {$\mathbb{E}[\textrm{CRLB}]$} across all coordinate points {$\boldsymbol{\sigma}^2$}. Hence, the optimal frequency aperture allocation vector should be aligned with {$\boldsymbol{\sigma}^2$} itself. For better intuition, an illustrative example is provided below.

\begin{figure*}
\begin{equation}\label{equ191}
\begin{aligned}
&{\mathbb{E}[\frac{\partial g_1({\bf v})}{\partial v_{k_1}} g_2({\bf v})]}\approx\sum_{k=1}^K\sum_{k^{\prime}=1}^K2{\sigma_{k}^2\sigma_{k^{\prime}}^2}\left(1\!+\operatorname{cos}\theta_{k_1}\right)\int_{ 0 }^{ 2 \pi }\!\displaystyle\int_{ 0 }^{ \infty}\!p(\theta_k,d_{1,k})\int_{ 0 }^{ 2 \pi}\!\displaystyle\int_{ 0 }^{ \infty }\!p(\theta_{k^{'}},d_{1,k^{'}}) \\
&\cdot[(1+ \cos \theta_k )^2 \sin ^2(\theta_{k^{'}}) - (1+ \cos \theta_k ) (1+ \cos \theta_{k^{'}} ) \sin ( \theta_{k} ) \sin (\theta_{k^{'}} )]{d}_{1,k}^{-\beta}{d}_{1,k^{'}}^{-\beta}{\rm d}\ {d}_{1,k} {\rm d}\ \theta_k {\rm d}\ {d}_{1,k^{'}} {\rm d}\ \theta_{k^{'}},
\end{aligned}
\end{equation}
\end{figure*}
\begin{figure*}
\begin{equation}\label{equ201}
\begin{aligned}
&{\mathbb{E}[g_1({\bf v}) \frac{\partial {g}_2({\bf v})}{\partial v_{k_ 1}}]}\approx\sum_{k=1}^K \sum_{k^{'}=1}^K2{\sigma_{k}^2\sigma_{k^{\prime}}^2}\int_{ 0 }^{ 2 \pi}\!\displaystyle\int_{ 0 }^{ \infty }\!p(\theta_k,d_{1,k})\int_{ 0 }^{ 2 \pi}\!\displaystyle\int_{ 0}^{ \infty}\!p(\theta_{k^{'}},d_{1,k^{'}})(1+\cos \theta_k)\\
&\cdot[\sin ^2 \theta_{k_1} (1+\cos \theta_{k^{'}})^2+(1+\cos\theta_{k_1})^2 \sin ^2 \theta_{k^{'}}-2 \sin \theta_{k_1} (1+\cos\theta_{k_1}) (1+\cos \theta_{k^{'}}) \sin \theta_{k^{'}}]{d}_{1,k}^{-\beta}{d}_{1,k^{'}}^{-\beta}{\rm d}\ {d}_{1,k} {\rm d}\ \theta_k {\rm d}\ {d}_{1,k^\prime} {\rm d}\ \theta_{k^{'}}.
\end{aligned}
\end{equation}
\end{figure*}

\begin{figure*}
\begin{small}
\begin{equation}
\begin{aligned}\label{equ212}
&{\mathbb{E}[\frac{\partial~ \textrm{CRLB}}{\partial v_{k_1}}]}\approx \Big\{ \int_0^{2\pi} \int_0^\infty p(\theta_k,d_{1,k}) \int_0^{2\pi} \int_0^\infty p(\theta_{k'},d_{1,k'}) \Big[
(1+\cos\theta_{k_1})(1+\cos\theta_k)(\cos\theta_k - \cos\theta_{k_1}) \sin^2\theta_{k'} \!+\!(1+\cos\theta_k)\\
&(1\!+\!\cos\theta_{k'})\sin\theta_k \left(2\sin\theta_{k_1}\!-\!(1\!+\!\cos\theta_{k_1})\sin\theta_{k'} \right)\!-\!(1\!+\!\cos\theta_k)^3 \sin^2\theta_{k_1}
\Big]\, d_{1,k}^{-\beta} d_{1,k'}^{-\beta} \, {\rm d} \theta_k \, {\rm d} d_{1,k} \, {\rm d} \theta_{k'} \, {\rm d} d_{1,k'}\Big\}\sum_{k=1}^K \sum_{k'=1}^K 2{\sigma_{k}^2\sigma_{k^{\prime}}^2}\Big/\{\rho\mathbb{E}[g_2({\bf v})^2]\}
\end{aligned}
\end{equation}
\end{small}
\end{figure*}

{\bf Example}: Assume that the angles and distances of the UEs are independent, and that the angles are uniformly distributed over $[0, 2\pi)$. Under this assumption, the expectation terms can be simplified to derive a closed-form polynomial gradient. {Note that while this independence assumption simplifies the derivation to reveal core physical insights, the resulting GWF strategy is applicable to general scenarios with coupled joint angle-distance distributions, as will be verified in Section V.} 

Under the assumption, {$\mathbb{E}[f\left(k, k^{\prime}\right)\left[f\left(k, k^{\prime}\right)-f\left(k^{\prime}, k\right)\right]]$ $\mathbb{E}[ v_k v_{k^{\prime}}]$} can be further simplified as
\begin{equation}\label{equ19}
\begin{aligned}
&{\mathbb{E}\{f\left(k, k^{\prime}\right)\left[f\left(k, k^{\prime}\right)-f\left(k^{\prime}, k\right)\right]\}\mathbb{E}[ v_k v_{k^{'}}]}\\
&\approx \displaystyle\int_{ 0 }^{ 2 \pi } \dfrac{ 1 }{ 2 \pi }\displaystyle\int_{ 0 }^{ 2 \pi} \dfrac{ 1 }{ 2 \pi } \Big[(1+ \cos ( x) ) \sin ( y) ((1+ \cos ( x)) \sin ( y) \\
&\ \ \ -(1+ \cos ( y) ) \sin ( x) )\Big] {\rm d} x {\rm d} y{\mathbb{E}[ v_k v_{k^{'}}] = 0.75\mathbb{E}[ v_k v_{k^{'}}]},
\end{aligned}
\end{equation}
{Thus, we have}
\begin{equation}
\begin{aligned}
{\mathbb{E}[\frac{\partial g_1({\bf v})}{\partial v_{k_1}} g_2({\bf v})]\approx \sum_{k=1}^K \sum_{k^{\prime}=1}^K1.5\left(1+\operatorname{cos}\theta_{k_1}\right)\mathbb{E}[ v_k v_{k^{'}}],}
\end{aligned}
\end{equation}
which describes the asymptotic behavior of the CRLB under uniform angular distribution. Furthermore, for this case, we have
\begin{equation}
\begin{aligned}
&{\mathbb{E}[1+\cos \theta_k] = 1,} \\
&{\mathbb{E}[(1+\cos \theta_{k^{'}})^2] = 1.5,}\\ &{\mathbb{E}[\sin ^2 \theta_{k^{'}}]=0.5,}\\
&{\mathbb{E}[(1+\cos \theta_{k^{'}}) \sin \theta_{k^{'}}]=0.}
\end{aligned}
\end{equation}
Thus, we have
\begin{small}
\begin{equation}
\begin{aligned}
&{\mathbb{E}[g_1({\bf v}) \frac{\partial {g}_2({\bf v})}{\partial v_{k_ 1}}]}\approx\sum_{k=1}^K \sum_{k^{'}=1}^K[3\sin ^2 \theta_{k_1}+(1+\cos\theta_{k_1})^2]{\mathbb{E}[v_k v_{k^{\prime}}]}.
\end{aligned}
\end{equation}\end{small}
Furthermore, we obtain
\begin{equation}\label{equ22}
\begin{aligned}
&{\mathbb{E}[\frac{\partial~ \textrm{CRLB}}{\partial v_{k_1}}]}\approx\\
&\frac{[2\cos^{2}\left(\theta_{k_1}\right)-0.5\cos\left(\theta_{k_1}\right)-2.5]\sum_{k=1}^K \sum_{k^{'}=1}^K{\mathbb{E}[v_k v_{k^{\prime}}]}}{\rho{\mathbb{E}[{g}_2({\bf v})^2]}},
\end{aligned}
\end{equation}
which represents the expected derivative of the CRLB with respect to $v_{k_1}$. Given that {$v_{k} = {d}_{1,k}^{-\beta}{d}_2^{-\beta} \sigma_{k}^2$}, the expected derivative of the CRLB with respect to {$\sigma_{k}^2$} can thus be formulated as
\begin{equation}\label{equ23}
\begin{aligned}
&{\mathbb{E}[\frac{\partial~ \textrm{CRLB}}{\partial \sigma_{k_1}^2}]=\mathbb{E}[\frac{\partial~ \textrm{CRLB}}{\partial v_{k_1}}\cdot\frac{\partial~ v_{k_1}}{\partial \sigma_{k_1}^2}]=\mathbb{E}[{d}_{1,k_{1}}^{-\beta}{d}_2^{-\beta}]}\\
&\cdot\frac{[2\cos^{2}\left(\theta_{k_1}\right)-0.5\cos\left(\theta_{k_1}\right)-2.5]\sum_{k=1}^K \sum_{k^{'}=1}^K{\mathbb{E}[v_k v_{k^{\prime}}]}}{\rho{\mathbb{E}[{g}_2({\bf v})^2]}}.
\end{aligned}
\end{equation}

According to (\ref{equ23}), since {$\mathbb{E}[v_k v_{k^{\prime}}]$} and {$\mathbb{E}\left[{g}_2^2({\bf v})\right]$} are both strictly positive for each UE, while the term $2\cos^{2}\left(\theta_{k_1}\right)-0.5\cos\left(\theta_{k_1}\right)-2.5$ is non-positive, it follows that {$\mathbb{E}\left[\frac{\partial~ \textrm{CRLB}}{\partial \sigma_{k_1}^2}\right]\le 0$} for all $k_1\in{1,\cdots,K}$. This result reveals that, for any UE, an increase in the frequency-domain aperture {$\sigma_{k}^2$} leads to a monotonic decrease in the CRLB, except for the degenerate case of {$\theta_{k_1}=\pi$} \footnote{If and only if {$\theta_{k_1}=\pi$}, the expected derivative of the CRLB with respect to {$\sigma_{k}^2$} equals zero.}. Consequently, the Proposition \ref{pro2_1} can be established.
\begin{proposition}\label{pro2_1}
\textit{Increasing the frequency-domain aperture of a UE whose angle {$\theta_{k_1}\ne\pi$} results in a decrease of CRLB, whereas increasing the frequency-domain aperture of a UE located at {$\theta_{k_1}=\pi$} has no effect on the CRLB.}
\end{proposition}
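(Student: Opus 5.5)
The plan is to read Proposition \ref{pro2_1} as a statement about the sign of the asymptotic expected gradient in (\ref{equ23}). It is established in the same regime as (\ref{equ22}): uniform angles, independent of distances, and a large number of UEs. The argument has three parts: fix the sign of every factor in (\ref{equ23}) except the angular polynomial, analyse that polynomial on its own, and then convert the sign of the derivative into a statement about how the CRLB responds to changes in $\sigma_{k_1}^2$.

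First, the prefactors. By the chain rule used in (\ref{equ23}), $\partial\,\textrm{CRLB}/\partial\sigma_{k_1}^2 = d_{1,k_1}^{-\beta}d_2^{-\beta}\,\partial\,\textrm{CRLB}/\partial v_{k_1}$. The path-loss factor $\mathbb{E}[d_{1,k_1}^{-\beta}d_2^{-\beta}]$ is strictly positive because all distances are finite. Each $\mathbb{E}[v_kv_{k'}]$ is a product of positive path-loss factors and nonnegative variances, and it is strictly positive whenever the apertures are nonzero. The constant $\rho$ is positive by its definition in terms of gains, bandwidth and noise. Finally, $\mathbb{E}[g_2^2(\mathbf v)]>0$ provided $\mathbf F_N$ is nondegenerate, that is, at least two UEs have non-collinear measurement vectors $(a_k,b_k)$. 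This holds almost surely when the angles are continuously distributed and $K\ge 2$. Consequently the sign of the expected gradient is exactly the sign of $h(\theta)=2\cos^2\theta-0.5\cos\theta-2.5$ evaluated at $\theta=\theta_{k_1}$.

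Second, the angular factor. Substituting $c=\cos\theta\in[-1,1]$ gives $h=2c^2-0.5c-2.5=(c+1)(2c-2.5)$. On $[-1,1]$ the first factor satisfies $c+1\ge 0$ and the second satisfies $2c-2.5\le -0.5<0$. Hence $h\le 0$, with equality if and only if $c=-1$, which is $\theta_{k_1}=\pi$. So the expected gradient is strictly negative for $\theta_{k_1}\neq\pi$, meaning that enlarging that UE's aperture strictly decreases the (expected) CRLB, and it vanishes at $\theta_{k_1}=\pi$.

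Third, the ``no effect'' part should not rest only on the first-order asymptotic expression. I would also verify it exactly from (\ref{equ3}). At $\theta_{k_1}=\pi$ we get $a_{k_1}=1+\cos\pi=0$ and $b_{k_1}=\sin\pi=0$, so the rank-one FIM contribution of UE $k_1$ is the zero matrix for every value of $\sigma_{k_1}^2$. Therefore $\mathbf F_N$, and with it the CRLB, does not depend on $\sigma_{k_1}^2$ at all. This is physically consistent: in the forward-scatter geometry the bistatic delay is insensitive to the target position.

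The main obstacle I expect is rigour about the asymptotic step. The derivative in (\ref{equ23}) is a ratio of expectations, $\mathbb{E}[\cdot]/\mathbb{E}[g_2^2]$, rather than the expectation of a ratio. The monotonicity claim is therefore justified only in the large-$K$ law-of-large-numbers sense already adopted for (\ref{equ17})--(\ref{equ18}), and I would state it explicitly as holding in that sense. The exact argument at $\theta_{k_1}=\pi$ is free of this caveat. Beyond that, the remaining steps reduce to the factorization of $h$ and the positivity of $\mathbb{E}[g_2^2]$.
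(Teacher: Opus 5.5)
Your proposal is correct and takes the same route as the paper. The paper likewise reads the sign of (\ref{equ23}) from the positivity of $\rho$, $\mathbb{E}[d_{1,k_1}^{-\beta}d_2^{-\beta}]$, $\mathbb{E}[v_kv_{k'}]$ and $\mathbb{E}[g_2^2(\mathbf v)]$, together with the non-positivity of $2\cos^2\theta_{k_1}-0.5\cos\theta_{k_1}-2.5$; your factorization $(c+1)(2c-2.5)$ makes explicit that non-positivity and the equality case $\theta_{k_1}=\pi$, which the paper only asserts.

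Your exact check $a_{k_1}=b_{k_1}=0$ at $\theta_{k_1}=\pi$ is a sound strengthening that the paper does not make. The same idea would also remove the large-$K$ caveat from the other clause, since $\partial\operatorname{Tr}(\mathbf F_N^{-1})/\partial v_k=-\rho\,\mathbf u_k^{\mathrm T}\mathbf F_N^{-2}\mathbf u_k$ with $\mathbf u_k=[a_k,b_k]^{\mathrm T}$, and this is strictly negative for every nonsingular $\mathbf F_N$ exactly when $\theta_k\neq\pi$.
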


\subsubsection{Monotonicity and Steepest Descent Analysis}

Moreover, according to (\ref{equ23}), the expected CRLB with regard to the frequency-domain aperture of the $k$th UE differs from that of the $k_1$th UE only in the term {$\mathbb{E}[{d}_{1,{k_1}}^{-\beta}{d}_2^{-\beta}]$}$[2\cos^{2}\left(\theta_{k_1}\right)-0.5\cos\left(\theta_{k_1}\right)-2.5]$. This implies that, once the angles and distances of UEs are determined, the CRLB ratios between any two dimensions remain constant. Consequently, the direction of the gradient of {$\mathbb{E}[\textrm{CRLB}]$} is fixed across all feasible points {$\boldsymbol{\sigma}^2$} in the allocation space.
Specifically, the gradient of {$\mathbb{E}[\textrm{CRLB}]$} can be expressed as
\begin{equation}
\begin{aligned}
&\nabla {\mathbb{E}[\textrm{CRLB}]}=\frac{\sum_{k=1}^K \sum_{k^{'}=1}^K{\mathbb{E}[v_k v_{k^{\prime}}]}}{\rho{\mathbb{E}[{g}_2({\bf v})^2]}}\\
&\cdot
\begin{bmatrix}
{\mathbb{E}[{d}_{1,{1}}^{-\beta}{d}_2^{-\beta}]}[2\cos^{2}\left(\theta_{1}\right)\!-\!0.5\cos\left(\theta_{1}\right)\!-\!2.5]\\
\vdots\\
{\mathbb{E}[{d}_{1,{K}}^{-\beta}{d}_2^{-\beta}]}[2\cos^{2}\left(\theta_{K}\right)\!-\!0.5\cos\left(\theta_{K}\right)\!-\!2.5]
\end{bmatrix}.
\end{aligned}
\end{equation}

Furthermore, since {$\mathbb{E}[\textrm{CRLB}]$} decreases monotonically with {$\sigma_k^2$} for all $k$ except when {$\theta_k=\pi$}, the optimization problem of minimizing {$\mathbb{E}[\textrm{CRLB}]$} under a total frequency-aperture constraint can be viewed as a directional resource allocation problem. In particular, to achieve the most effective reduction in {$\mathbb{E}[\textrm{CRLB}]$}, the frequency-domain apertures should be distributed along the direction of steepest descent, which corresponds to the gradient direction {$\nabla \mathbb{E}[\textrm{CRLB}]$}.

\begin{proposition}\label{pro2}
\textit{Under a total frequency-aperture constraint, distributing the frequency-domain apertures of UEs proportionally to the direction of {$\nabla \mathbb{E}[\rm{CRLB}]$} minimizes the expected CRLB.}
\end{proposition}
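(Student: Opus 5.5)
The plan is to turn the fixed-direction property of $\nabla\mathbb{E}[\textrm{CRLB}]$ from (\ref{equ23}) into a reduction to a one-dimensional problem, and then to solve the resulting linear selection problem with the Cauchy--Schwarz inequality. First, write the gradient as $\nabla\mathbb{E}[\textrm{CRLB}](\boldsymbol{\sigma}^2)=s(\boldsymbol{\sigma}^2)\,\mathbf{c}$. Here $\mathbf{c}$ has entries $c_k=\mathbb{E}[d_{1,k}^{-\beta}d_2^{-\beta}]\,[2\cos^2\theta_k-0.5\cos\theta_k-2.5]\le 0$, which depend only on the geometry. The factor $s(\boldsymbol{\sigma}^2)=\sum_{k,k'}\mathbb{E}[v_kv_{k'}]/(\rho\mathbb{E}[g_2^2(\mathbf{v})])$ is strictly positive. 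Proposition~\ref{pro2_1} already gives the sign information, namely that every coordinate with $\theta_k\neq\pi$ is a strict descent coordinate.

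Second, I will show that, because the gradient field is everywhere parallel to the fixed vector $\mathbf{c}$, the expected CRLB is constant on every hyperplane orthogonal to $\mathbf{c}$. The argument is to take any two feasible points $\boldsymbol{\sigma}^2_a,\boldsymbol{\sigma}^2_b$ with $\mathbf{c}^{\rm T}\boldsymbol{\sigma}^2_a=\mathbf{c}^{\rm T}\boldsymbol{\sigma}^2_b$ and integrate the gradient along the segment joining them. The integrand $s\,\mathbf{c}^{\rm T}(\boldsymbol{\sigma}^2_b-\boldsymbol{\sigma}^2_a)$ vanishes identically on that segment. Hence $\mathbb{E}[\textrm{CRLB}]=h(\mathbf{c}^{\rm T}\boldsymbol{\sigma}^2)$ for a scalar function $h$, and $h'=s>0$. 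Since every $c_k\le0$, minimising $\mathbb{E}[\textrm{CRLB}]$ is equivalent to maximising the linear functional $\mathbf{u}^{\rm T}\boldsymbol{\sigma}^2$ with $\mathbf{u}=-\mathbf{c}\ge 0$.

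Third, I will formalise the total frequency-aperture constraint in the steepest-descent sense, i.e.\ as a Euclidean budget $\|\boldsymbol{\sigma}^2\|_2\le R$ on the aperture vector, or on the increment from a given point. Under that reading, Cauchy--Schwarz gives $\mathbf{u}^{\rm T}\boldsymbol{\sigma}^2\le\|\mathbf{u}\|_2R$, with equality if and only if $\boldsymbol{\sigma}^2=R\,\mathbf{u}/\|\mathbf{u}\|_2$. This choice is proportional to $-\nabla\mathbb{E}[\textrm{CRLB}]$, which is the claimed allocation. As a consequence, a UE with $\theta_k=\pi$ has $u_k=0$ and receives no aperture, which is consistent with Proposition~\ref{pro2_1}.

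The main obstacle is the choice of constraint in the third step. With the weighted linear budget $\sum_k N_k\sigma_k^2\le E_{\text{total}}$ used in (\ref{equ10-1}), maximising $\mathbf{u}^{\rm T}\boldsymbol{\sigma}^2$ is a linear program whose optimum sits at a vertex, not at a proportional allocation. My plan is therefore to state explicitly that the proposition is the steepest-descent statement: among all allocation directions of unit norm, the gradient direction gives the largest first-order decrease. I would also note that, in the weighted metric $\sum_k N_k\sigma_k^4$, the same Cauchy--Schwarz argument yields $\sigma_k^2\propto u_k/N_k$. A secondary point to check is that the scalar $s$ depends on $\boldsymbol{\sigma}^2$ only through $\mathbf{c}^{\rm T}\boldsymbol{\sigma}^2$, which is needed for the gradient field to be conservative. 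I would accept this at the level of the asymptotic approximation in (\ref{equ23}), taking the expression for $h$ as that approximation, rather than prove it exactly.
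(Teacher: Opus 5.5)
Your route is the paper's route made precise. The paper's whole justification has three parts: (\ref{equ23}) gives a gradient of fixed direction, $\mathbb{E}[\textrm{CRLB}]$ is monotone by Proposition~\ref{pro2_1}, and apertures ``should be distributed along the direction of steepest descent''. It never says in which norm descent is steepest, and it never checks that this direction is the constrained minimiser.

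Your Step~2 and Step~3 are both correct. Step~2 shows constancy on hyperplanes orthogonal to $\mathbf{c}$, hence $\mathbb{E}[\textrm{CRLB}]=h(\mathbf{c}^{\rm T}\boldsymbol{\sigma}^2)$. Step~3 applies Cauchy--Schwarz on an $\ell_2$ ball.

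Your diagnosis of the obstacle is also right. The paper actually uses the budget $\sum_k N_k\sigma_k^2\le E_{\text{total}}$ of (\ref{equ10-1}) and (\ref{equ:38}). Under that budget, the same premise gives a linear program whose optimum puts all aperture on $\arg\max_k u_k/N_k$. Put in KKT terms, a linear budget requires $\nabla\mathbb{E}[\textrm{CRLB}]\propto-(N_1,\dots,N_K)$ on the support. An $\ell_2$ budget instead requires $\boldsymbol{\sigma}^2\propto-\nabla\mathbb{E}[\textrm{CRLB}]$, which is exactly the proposition. So the statement can only be proved in your steepest-descent, $\ell_2$ reading. Present your result that way, not as a proof under the paper's total-aperture budget.

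Your secondary point can be settled rather than assumed. The FIM is linear in $\boldsymbol{\sigma}^2$, so $\mathbb{E}[\textrm{CRLB}]$ is homogeneous of degree $-1$. Combined with $\mathbb{E}[\textrm{CRLB}]=h(\mathbf{c}^{\rm T}\boldsymbol{\sigma}^2)$, this forces $\mathbb{E}[\textrm{CRLB}]=C/(\mathbf{u}^{\rm T}\boldsymbol{\sigma}^2)$ with $C>0$, and hence $s=C/(\mathbf{u}^{\rm T}\boldsymbol{\sigma}^2)^2$. The prefactor written in (\ref{equ23}) is not in general a function of $\mathbf{u}^{\rm T}\boldsymbol{\sigma}^2$ alone. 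This confirms that (\ref{equ23}) is only approximately a gradient field, so your result is rigorous only relative to that approximation.

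The same observation explains why your vertex conclusion clashes with Proposition~\ref{prop1}. The exact two-UE objective $\Gamma_1/\sigma_1^2+\Gamma_2/\sigma_2^2$ has a gradient whose direction rotates with $\boldsymbol{\sigma}^2$, and it has an interior optimum. It is infinite at a vertex, because the FIM is then rank one (cf.\ Fig.~\ref{figure11}), whereas $C/(\mathbf{u}^{\rm T}\boldsymbol{\sigma}^2)$ stays finite there. The fixed-direction property comes from the asymptotic averaging, not from the CRLB itself.
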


{\bf Remark}:
This result establishes a sensing counterpart to the classic capacity-maximizing water-filling algorithm in communications.
In communication systems, power is allocated to channels with higher signal-to-noise ratios (channel gains).
Analogously, in our network-level ISAC system, the term {$\left| \mathbb{E}[\frac{\partial~ \textrm{CRLB}}{\partial \sigma_{k_1}^2}] \right|$} can be interpreted as the \textit{``geometric gain"} of the $k$-th transceiver.
Nodes with favorable geometry (e.g., providing orthogonal perspectives) exhibit higher sensitivity (larger gradients), akin to ``deep channels" , and are thus allocated more aperture resources. Conversely, nodes with poor geometry (e.g., collinear with the target) have vanishing gradients and receive negligible resources. We term this strategy \textit{geometric water-filling}, as it maximizes the marginal sensing gain of the network.

\section{VGPA: From Geometric Water-Filling to Resource Allocation}

While the geometric water-filling principle provides the optimal continuous aperture values $\{\sigma_{k,\text{opt}}^2\}_{k=1}^{K}$, mapping these values to discrete subcarrier sets $\{\mathcal{S}_k\}_{k=1}^{K}$ in an OFDMA system with constraints on communication rate is a non-trivial combinatorial problem. In this section, we formulate this challenge as an integer set partitioning problem and propose a low-complexity algorithm.

\begin{algorithm}[tbp]
\caption{Variance-Guided Partitioning Algorithm}
\label{alg:variance_partition}
\begin{algorithmic}[1]
\STATE \textbf{Input:} Integer set $\mathcal{N}=\{1,2,\dots,N\}$; target sizes $\{n_k\}_{k=1}^K$; 

\STATE \textbf{Step 1: Initialization guided by target variances}
    \STATE Obtain variance ratio $\boldsymbol{\alpha}$ by geometric water-filling strategy
    \STATE Compute total variance $\sigma^2_{\text{total}}$ and per-subset target variance $\sigma_{k,\textrm{target}}^2$ by (\ref{equ:38})
    \FOR{$k=1$ to $K$ (in descending order of $\sigma_{k,\textrm{target}}^2$)}
        \STATE Select $n_k$ elements from $\mathcal{N}$ with spacing proportional to $\sqrt{\sigma_k^2}$
        \STATE Assign to subset $\mathcal{S}_k$ and remove them from $\mathcal{N}$
    \ENDFOR
\vspace{0.4em}

\STATE \textbf{Step 2: Variance-Consistent Refinement}
    \FOR{$t=1$ to $T_{\max}$}
        \STATE Randomly select two subsets $\mathcal{S}_i$ and $\mathcal{S}_j$
        \STATE Exchange one element between them to obtain $\{\mathcal{S}'_i,\mathcal{S}'_j\}$
        \STATE Evaluate the change value of the objective function $\Delta$
        \IF{$\Delta < 0$}
            \STATE Accept the exchange
        \ELSE
            \STATE Accept with probability $p = \exp(-\Delta / T_t)$ \hfill according to the Metropolis criterion
        \ENDIF
        \STATE Optionally decrease $T_t$ gradually to enhance convergence
    \ENDFOR
\vspace{0.4em}

\STATE \textbf{Output:} $\{\mathcal{S}_k\}_{k=1}^K$ 
\end{algorithmic}
\end{algorithm}

\subsection{Problem Formulation}
The resource allocation task is to partition the integer set of subcarriers $\mathcal{N} = \{1, 2, \dots, N\}$ into $K$ disjoint subsets $\{\mathcal{S}_k\}_{k=1}^K$ satisfying three key constraints
\begin{enumerate}
    \item {Cardinality constraint:} $|\mathcal{S}_k| = N_k$, ensuring the total bandwidth budget is met.
    \item {Aperture matching:} the variance of each subset, $\text{Var}(\mathcal{S}_k)$, should match the optimal target variance $\sigma_{k,\text{opt}}^2$ derived from the GWF principle.
    \item {Mean consistency:} the mean of each subset, $\text{Mean}(\mathcal{S}_k)$, should align with the global center frequency $\mu = (N+1)/2$ to maximize the achievable frequency aperture (refer to Proposition 4 in \cite{wang2024fundamentaltradeoffstimefrequencyresource}).
    \item {Communication rate constraint:} the minimized communication rate of each UE shuld be satisfied.
\end{enumerate}

Mathematically, this can be formulated as minimizing the deviation from these targets
\begin{small}
\begin{equation} \label{eq:obj_func}
\begin{aligned}
\min_{\{\mathcal{S}_k\}} \quad & \sum_{k=1}^K (\mu_k - \mu)^2 + \sum_{k=1}^K \left( \frac{\sigma_k^2}{\alpha_k} - \frac{1}{K} \sum_{j=1}^K \frac{\sigma_j^2}{\alpha_j} \right)^2 \\
\textrm{s.t.} \quad & \bigcup_{k=1}^K \mathcal{S}_k = \mathcal{N}, \quad \mathcal{S}_i \cap \mathcal{S}_j = \emptyset, \quad |\mathcal{S}_k| = N_k,\\
& \sum_{n\in\mathcal{S}_{k}}\log_{2}(1+\mathrm{SNR}_{k,n})\geq R_{k}^{\mathrm{req}},\quad\forall k\in\{1,\ldots,K\}
\end{aligned}
\end{equation}\end{small}
where $\mu_k$ and $\sigma_k^2$ are the mean and variance of subset $\mathcal{S}_k$, respectively. $\alpha_k$ is the target aperture ratio, $\mathrm{SNR}_{k,n}$ is the signal-to-noise (SNR) ratio of the $n$th subcarrier of the $k$th UE, and $R_{k}^{\mathrm{req}}$ is defined as the minimized rate required by the $k$th UE.
Finding the global optimum for (\ref{eq:obj_func}) is an NP-hard integer partitioning problem. Exhaustive search is computationally prohibitive for practical values of $N$ (e.g., $N \ge 64$).

\subsection{Theoretical Insight: The Variance-Stride Relationship}
To design an efficient heuristic, we look for a constructive method to generate subsets with specific variances. We exploit a fundamental geometric property of discrete sequences: \textit{the variance is dominated by the spacing (stride) between elements.}

\begin{proposition}\label{pro4}
For a discrete set of $N_k$ integer elements selected with a uniform stride (spacing) $\eta_k$, the variance of the selected subset is proportional to $\eta_k^2$. 
\end{proposition}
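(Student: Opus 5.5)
We will prove this by computing the variance of an arithmetic progression in closed form. Write the selected subset as $\mathcal{S}_k=\{s_0+i\eta_k : i=0,1,\dots,N_k-1\}$, where $s_0\in\mathbb{Z}$ is the starting index and $\eta_k\in\mathbb{Z}^{+}$ is the stride. We take the variance in the same normalized form as in (\ref{equ10-1}), namely $\sigma_k^2=\frac{1}{N_k}\sum_{n\in\mathcal{S}_k}(n-\mu_k)^2$.

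\textbf{Step 1: remove the offset.} Variance is invariant under translation, so the offset $s_0$ drops out. Since the mean of $\{i\}_{i=0}^{N_k-1}$ is $(N_k-1)/2$, the mean of the subset is $\mu_k=s_0+\eta_k\frac{N_k-1}{2}$. Each deviation from the mean is therefore $n-\mu_k=\eta_k\bigl(i-\frac{N_k-1}{2}\bigr)$.

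\textbf{Step 2: factor out the stride.} Because every deviation carries the factor $\eta_k$,
\begin{equation*}
\sigma_k^2=\eta_k^2\cdot\frac{1}{N_k}\sum_{i=0}^{N_k-1}\Bigl(i-\frac{N_k-1}{2}\Bigr)^2 .
\end{equation*}

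\textbf{Step 3: evaluate the remaining sum.} Using $\sum i^2=\frac{(N_k-1)N_k(2N_k-1)}{6}$ and subtracting $N_k$ times the squared mean, the normalized sum equals $\frac{N_k^2-1}{12}$, the variance of a discrete uniform distribution. Hence
\begin{equation*}
\sigma_k^2=\frac{N_k^2-1}{12}\,\eta_k^2 .
\end{equation*}
For fixed cardinality $N_k$, this shows $\sigma_k^2\propto\eta_k^2$ with proportionality constant $(N_k^2-1)/12$. As a sanity check, taking $\eta_k=1$ and $N_k=N$ recovers $E_{\text{total}}/N$.

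\textbf{Where the difficulty lies.} The computation itself is routine; the main obstacle is that the claim is only exact when the stride is genuinely uniform and the entire progression fits inside $\mathcal{N}$, i.e.\ $s_0\ge1$ and $s_0+(N_k-1)\eta_k\le N$. We will state this containment as a hypothesis. We will also note a consequence for Algorithm~\ref{alg:variance_partition}: spacings proportional to $\sqrt{\sigma_k^2}$ that get rounded to integers, or that collide with indices already assigned, give only approximate proportionality. For that case we would add a remark bounding the perturbation. Moving a single element by $\pm1$ changes $\sum(n-\mu_k)^2$ by $O(\eta_k N_k)$, so the relative error in $\sigma_k^2$ is $O\bigl(1/(\eta_k N_k)\bigr)$, which supports the use of this proposition as a heuristic guide rather than an exact identity.
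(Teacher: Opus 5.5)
Your proof is correct and takes the same route as the paper's own proof: write $\mathcal{S}_k$ as an arithmetic progression $x_k+i\eta_k$ and pull $\eta_k^2$ out of every squared deviation. Your constant $(N_k^2-1)/12$ is the right one, whereas the paper's $\frac{(N_k-1)(2N_k-1)}{6}$ is the mean squared distance from the starting index $x_k$ rather than from the subset mean; that slip happens not to affect the proportionality claim. Your optional rounding remark is valid but loose: a unit shift changes $\sigma_k^2$ by $O(\eta_k)$ while $\sigma_k^2=\Theta(N_k^2\eta_k^2)$, so the relative error is really $O\bigl(1/(\eta_k N_k^2)\bigr)$.
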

\textit{Proof:} See Appendix \ref{proof2}.

\textbf{Remark:} Proposition \ref{pro4} implies that controlling the \textit{stride} provides a direct and effective mechanism to control the \textit{frequency aperture}. A subset requiring a large variance (high geometric gain) should be allocated subcarriers with a large stride (i.e., a sparse comb-like pattern), while a subset with low variance requirement should be allocated contiguous subcarriers (small stride). This insight forms the basis of our proposed algorithm.

\subsection{The Variance-Guided Partitioning Algorithm (VGPA)}
Based on the variance-stride relationship, we propose the VGPA, which operates in two stages: a constructive initialization and a stochastic refinement.

\subsubsection{Stage 1 (Variance-Guided Initialization)}
This stage constructs a high-quality initial partition by greedily allocating subcarriers based on the required stride.
First, we calculate the target variance $\sigma_{k,\text{target}}^2$ for each user based on the global variance of $\mathcal{N}$ and the target ratios $\alpha_k$ (see Appendix \ref{proof3} for derivation): 
\begin{equation}\label{equ:38}
\sigma_{k,\text{target}}^{2} = \frac{N\alpha_{k}\sigma_{\text{total}}^2}{\sum_{k=1}^{K}N_k\alpha_k}.
\end{equation}
Then, we employ a \textit{largest-aperture-first} strategy: users are sorted in descending order of their target variances. For each user $k$, we estimate the required stride $\eta_k \propto \sqrt{\sigma_{k,\text{target}}^2}$. We then select $N_k$ available subcarriers from $\mathcal{N}$ using this stride, centering the selection around the global mean $\mu$ to satisfy the mean consistency constraint simultaneously. This constructive approach avoids collision and ensures the initial partition is already close to the optimal geometric configuration.

\subsubsection{Stage 2 (Rate-Constrained Iterative Refinement )}
While initialization provides a geometrically favorable partition, it may not strictly satisfy the communication rate constraints defined in (\ref{eq:obj_func}). To address this, we reformulate the optimization problem using the penalty method. Specifically, we incorporate a quadratic penalty term into the objective function:
\begin{equation}
\begin{small}
\begin{aligned}
&\sum_{k=1}^K \Big[(\mu_k - \mu)^2 +  \left( \frac{\sigma_k^2}{\alpha_k} - \frac{1}{K}  \frac{\sigma_j^2}{\alpha_j} \right)^2\\
&+\max\left(0,R_k^{req}-\sum_{n\in\mathcal{S}_{k}}\log_{2}(1+\mathrm{SNR}_{k,n})\right)^2\Big].
\end{aligned}
\end{small}
\end{equation}
We then refine the partition using a stochastic local search. In each iteration, we randomly select two subsets and swap an element between them. The swap is accepted based on the Metropolis criterion to escape local minima. According to \cite{4767596}, the refinement process following Metropolis criterion asymptotically converges to the global optimum with probability one under a sufficiently slow cooling schedule \cite{4767596}.
This implicitly enforces the \textit{geometric water-filling} principle, ensuring that the critical aperture ratios are strictly met, while mean consistency serves as a secondary soft constraint. The detailed procedure is summarized in {\bf Algorithm} \ref{alg:variance_partition}, where $T_\textrm{max}$ is the maximum number of iterations.

\subsection{Complexity Analysis}
The complexity of VGPA is dominated by the two stages. The initialization traverses the subcarrier set $\mathcal{N}$ once, scaling linearly with $\mathcal{O}(N)$. The refinement stage runs for a fixed number of iterations $I$, with each objective evaluation taking $\mathcal{O}(K)$. Thus, the total complexity is $\mathcal{O}(N + IK)$. Since the complexity scales linearly with the number of subcarriers $N$ rather than exponentially, VGPA ensures high scalability for practical wideband ISAC systems.

\section{Numerical Simulation}
In this section, numerical simulations are performed to evaluate our theoretical analysis. 
Specifically, we set the carrier frequency $f_\textrm{c}$ to be $28$ GHz, while the subcarrier spacing is chosen to be $100$ kHz \cite{rahman2019framework, wxy}. 
Then, the base station is fixed at coordinates $[0,70]$ meters, and the sensing target is at the origin $[0,0]$ meters. Unless otherwise specified, the positions of the distributed transmitters are randomly generated within a circular region with a radius of 40 meters. For simplicity, simulations are conducted with 50 distributed transmitters, as depicted in Fig. \ref{figure10}.








\begin{figure}[tbp]
	\centering
	\includegraphics[width=6.9cm]{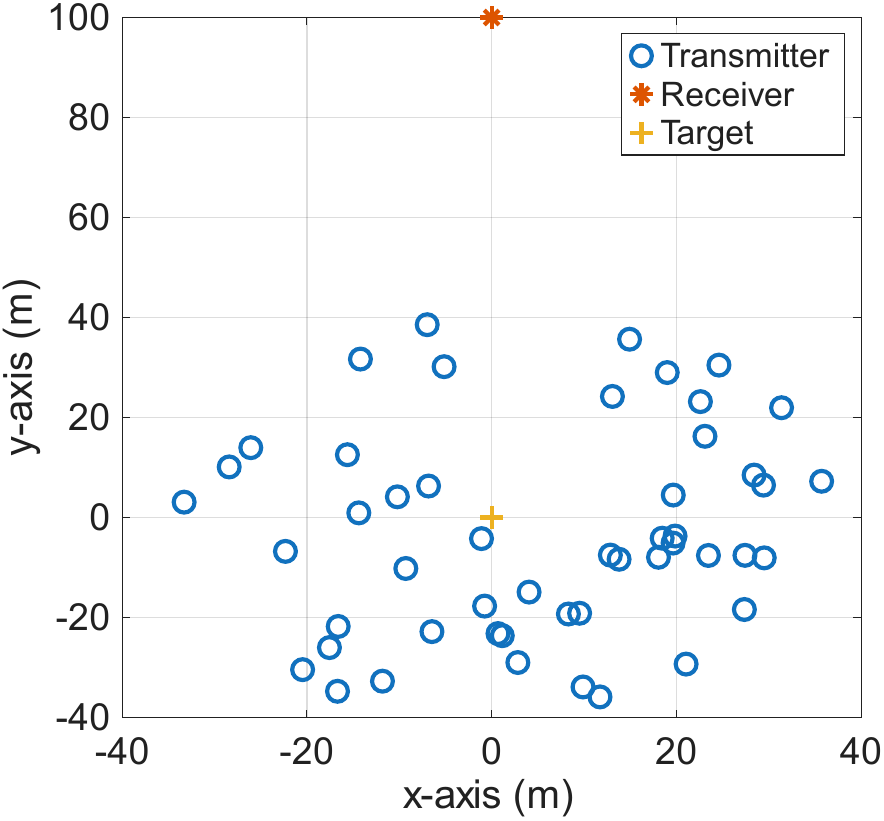}
	\caption{Network-level distributed ISAC set-up.}
	\label{figure10}
\end{figure}

\begin{figure}[tbp]
	\centering
	\includegraphics[width=7.4cm]{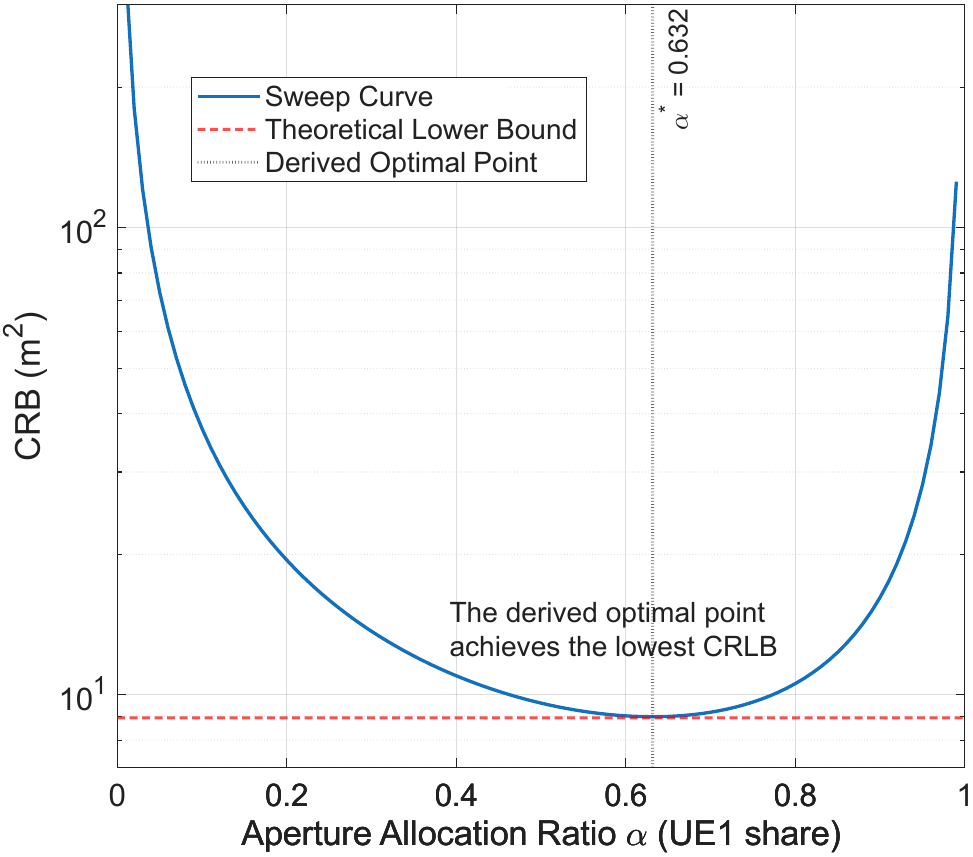}
	\caption{The CRLB corresponding to different aperture allocation scheme in two-UE scenarios.}
	\label{figure11}
\end{figure}

Fig. \ref{figure11} illustrates the CRLB performance as a function of the aperture-allocation ratio~$\alpha$ between the two UEs. In this simulation, we set two UEs at $(50,40)$m and $(-30,-30)$m with $64$ and $64$ subcarriers, respectively. As shown in the figure, the CRLB exhibits a symmetric bowl-shaped trend, becoming extremely large when $\alpha$ approaches $0$ or $1$, i.e., when one UE occupies almost the entire aperture budget while the other receives negligible sensing resources. This confirms the intuitive fact that assigning all aperture to a single UE severely degrades cooperative sensing performance.
When $\alpha = 0.607$, the CRLB curve reaches its minimum and matches the analytical lower bound, validating both the correctness and the tightness of our theoretical derivation.


\begin{figure}[tbp]
    \centering
    \subfloat[]{
        \includegraphics[width=0.5\linewidth,height=0.74\linewidth]{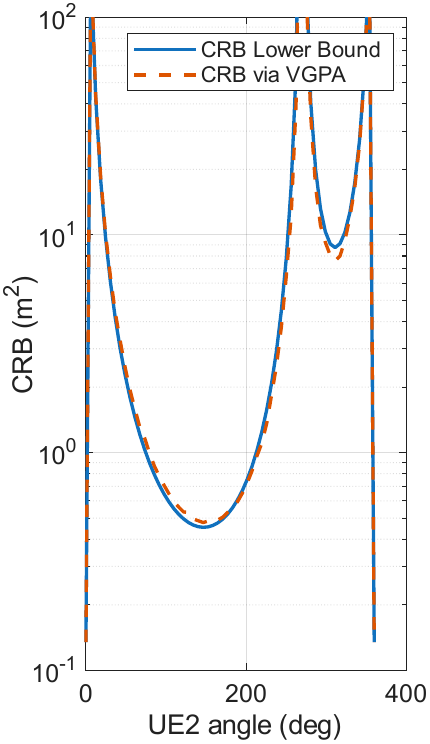} 
        \label{fig:angle_scan}
    }
    \subfloat[]{
        \includegraphics[width=0.47\linewidth,height=0.73\linewidth]
        {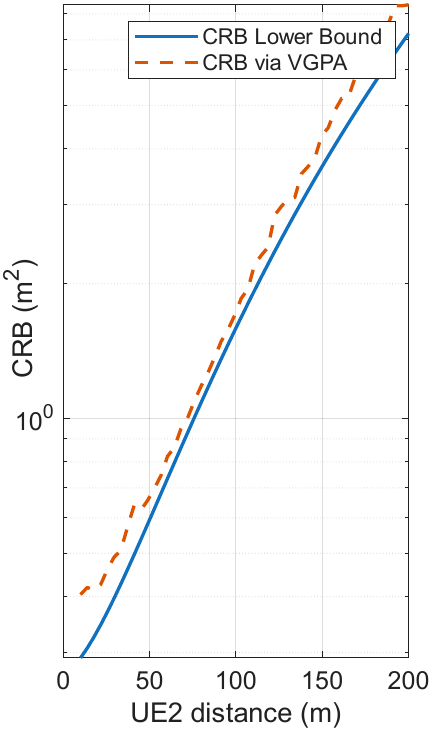} 
        \label{fig:dist_scan}
    }
    \caption{The CRLB lower bound and CRLB via VGPA with different (a) angles and (b) distances of the $2$nd UE in two-UE scenarios.}
    \label{figure12}
\end{figure}

Fig. \ref{figure12} compares the theoretical CRLB lower bound with the CRLB obtained via the proposed VGPA-based aperture allocation under two different geometric variations of the second UE.
In the first subfigure, the angle of the first UE is fixed at $\theta_1=0$, and the distance of the second UE is set to $60$~m, while $\theta_2$ varies from $0^\circ$ to $360^\circ$.
The results explicitly visualize the impact of sensing geometry on the validity of the optimization framework.
Specifically, we observe two distinct regimes where the localization performance degrades severely due to the \textit{rank deficiency} of the FIM:
\begin{itemize}
    \item {Geometric redundancy ($\theta_2 \approx \theta_1$):} When the two UEs are nearly co-located (around $0^\circ$ and $360^\circ$), the geometric steering vectors $\mathbf{u}_1$ and $\mathbf{u}_2$ become linearly dependent, the FIM becomes ill-conditioned, and the cooperative gain vanishes.
    \item {Geometric singularity ($\theta_2 \approx \theta_0 + \pi$):} when $\theta_2 \approx 270^\circ$, namely a transmitter aligns with the target-BS axis in the backward direction, its corresponding FIM eigenvalues approach zero. This reduces the effective rank of the FIM from 2 to 1, rendering 2D localization mathematically impossible.
\end{itemize}
Across the valid angular range, the VGPA-based CRLB closely tracks the theoretical lower bound, confirming that the proposed algorithm effectively exploits the sensing geometry to minimize the CRLB.
The second subfigure shows the CRLB as a function of the distance of UE2, with its angle fixed at $\pi/2$ (an orthogonal, well-conditioned geometry).
As expected, the CRLB increases monotonically with distance due to path loss.
In this non-singular regime, the VGPA-generated CRLB remains consistently close to the theoretical lower bound, validating the accuracy of the proposed aperture allocation scheme in general cooperative ISAC scenarios.

\begin{figure}[tbp]
	\centering
	\includegraphics[width=6.8cm]{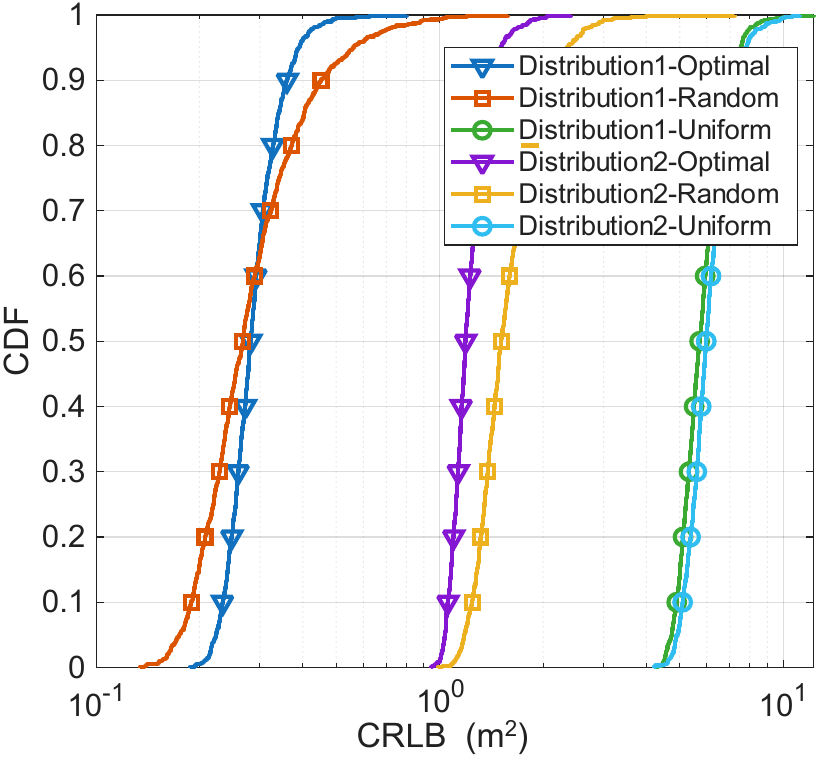}
	\caption{CDF of CRLB with different tranceivers distribution.}
	\label{figure13}
\end{figure}

\begin{figure}[tbp]
    \subfloat[]{
        \includegraphics[width=7.7cm]{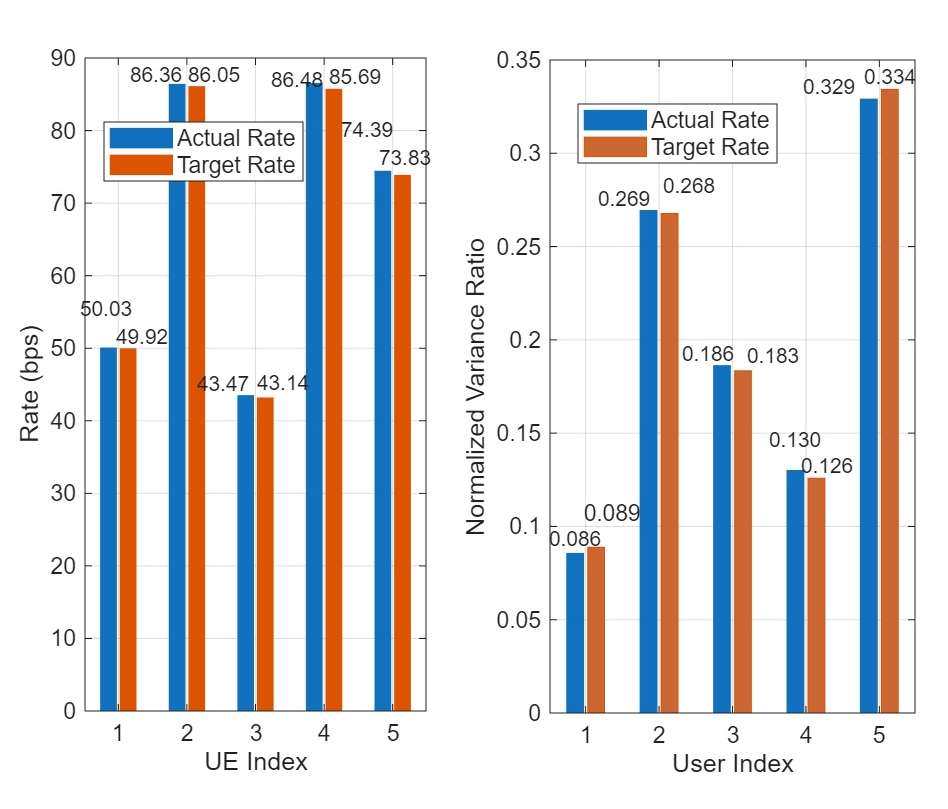}
        \label{figure14}
    }
    \hfill
    \subfloat[]{
    \hspace{0.5cm}
        \includegraphics[width=6.8cm]{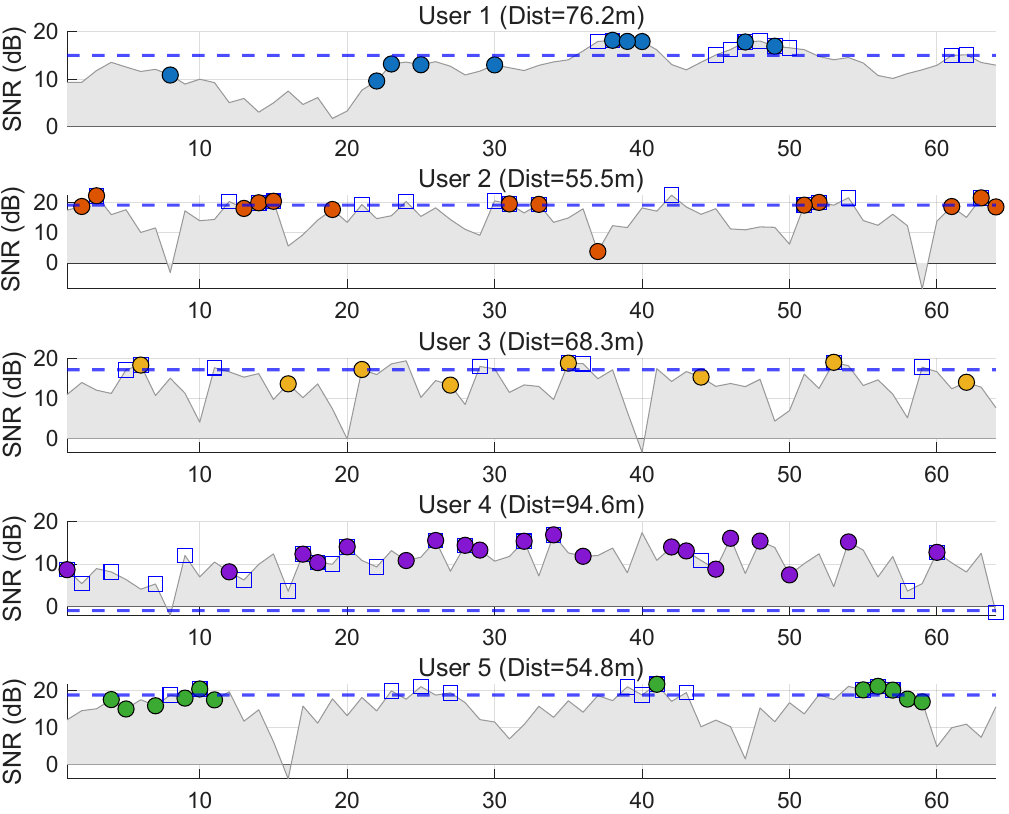}
        \label{figure15}
    }
    \caption{Verification of resource allocation in five-UE scenarios under communication rate constraints. (a) Comparison between the theoretical optimal variance ratios/target rates and the actual values achieved by VGPA. (b) Comparison of subcarrier allocation patterns between VGPA and traditional water-filling. The blue squares represent subcarriers selected by water-filling, while the colorful circles represent those selected by the proposed VGPA. The background grey area indicates the SNR profile of the subcarriers.}
    \label{figure_combined1}
\end{figure}

Fig. \ref{figure13} illustrates the CDF of the sensing CRLB under two different transmitter–receiver spatial distributions and three aperture-allocation schemes: Optimal (GWF), Random, and Uniform. For Distribution 2, where transceivers are uniformly distributed over the full angular range $(0, 2\pi)$, the proposed Optimal allocation achieves the best performance. Interestingly, the Uniform allocation, which assigns equal aperture resources to all UEs, yields the worst performance, even underperforming the Random baseline. This counter-intuitive result highlights the severity of the geometric coupling: Uniform allocation rigidly forces resources into nodes with poor sensing geometry (e.g., those collinear with the target-receiver axis or in deep fades), thereby inefficiently diluting the total aperture budget. In contrast, the Random allocation introduces resource diversity; although geometry-agnostic, its stochastic nature allows for unequal distributions that can probabilistically assign more resources to high-gain nodes than the rigid uniform scheme. For Distribution 1, where transceivers are uniformly placed within $(0, 1.5\pi)$, the overall CRLB is lower than that of Distribution 2 due to the concentrated deployment. However, the performance hierarchy remains consistent, with the Optimal allocation maintaining the lowest CRLB by explicitly exploiting geometric sensitivity to perform ``water-filling.'' Moreover, in this case, the performance gap between optimal and random allocation is small, with the mean CRLB under the optimal scheme being only $0.0115$ m smaller than that under the random allocation.
These results indicate that (i) the spatial distribution of transmitters significantly affects cooperative sensing performance, and (ii) the proposed optimal allocation consistently outperforms the random and uniform baseline under both wide-angle and sector-limited deployment scenarios.



Fig. \ref{figure_combined1} validates the effectiveness of the proposed VGPA in scenarios with strict communication rate constraints. Fig. 6(a) quantitatively compares the target metrics against the actual results achieved by the algorithm. It is observed that the actual communication rates for all UEs strictly satisfy the minimum rate requirements, validating the efficacy of the penalty-based formulation. Simultaneously, the actual variance ratios achieved by VGPA closely track the theoretical optimal ratios derived from Proposition 3. This confirms that VGPA successfully navigates the trade-off, satisfying communication QoS while precisely shaping the resource distribution to match the sensing geometry requirements.Fig. 6(b) provides a microscopic view of the fundamental conflict and trade-off between sensing geometry and communication channels. The grey background represents the frequency-selective fading channel (SNR) for specific users. The blue squares illustrate the allocation by the traditional capacity-maximizing Water-Filling algorithm, while the colorful circles represent the proposed ISAC allocation.As observed, the Water-Filling strategy (blue squares) strictly occupies the 'channel peaks' to maximize SNR, often resulting in clustered subcarriers with small frequency variance (narrow aperture). In contrast, the proposed VGPA (circles) adopts a 'channel-aware aperture expansion' strategy: while it still favors high-SNR regions to meet rate constraints, it actively disperses subcarriers towards the band edges to maximize frequency variance. This visualization effectively demonstrates that the proposed scheme intelligently sacrifices marginal SNR gains—by selecting subcarriers that are 'good enough' rather than 'optimal' in channel gain—to achieve significant gains in sensing aperture.

\begin{figure}[tbp]
    \centering
    \subfloat[]{
        \includegraphics[height=6.8cm,width=7.3cm]{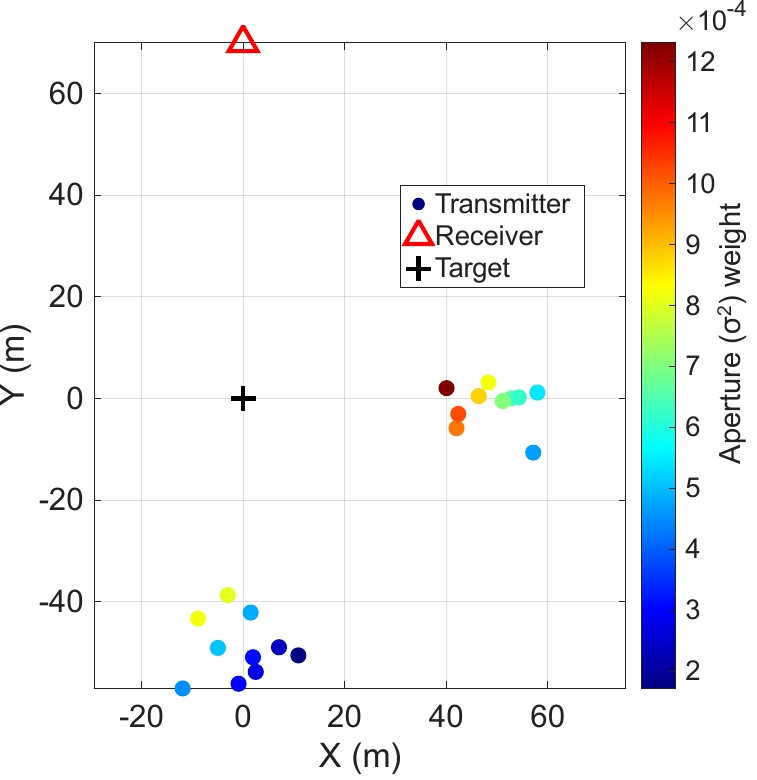}
        \hspace{-0.9cm}
        \label{figure141}
    }
    \hfill
    \subfloat[]{
    \hspace{-0.3cm}
        \includegraphics[width=6.6cm]{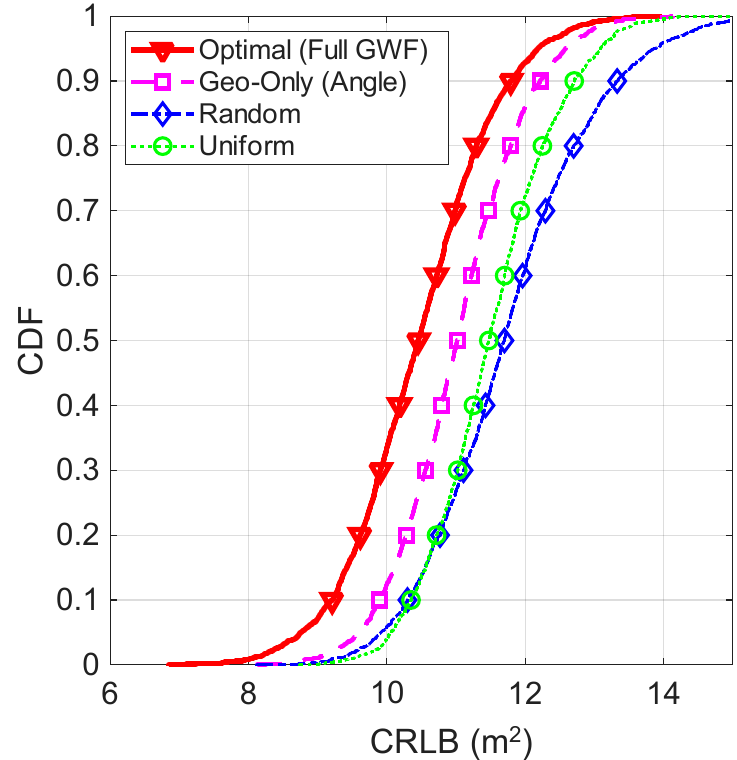}
        \label{figure151}
    }
    \caption{Evaluation under a non-typical Gaussian mixture distribution. (a) The spatial distribution of transmitters forms two distinct clusters: a ``good geometry" cluster (orthogonal to receiver-Target axis) and a ``bad geometry" cluster (collinear). The color intensity represents the aperture weight allocated by the proposed strategy. (b) CDF comparison of sensing CRLB.}
    \label{figure_combined}
\end{figure}

\begin{figure}[tbp]
    \centering
    \subfloat[]{
    \hspace{0.5cm}
    \vspace{1cm}
    \includegraphics[height=6.8cm]{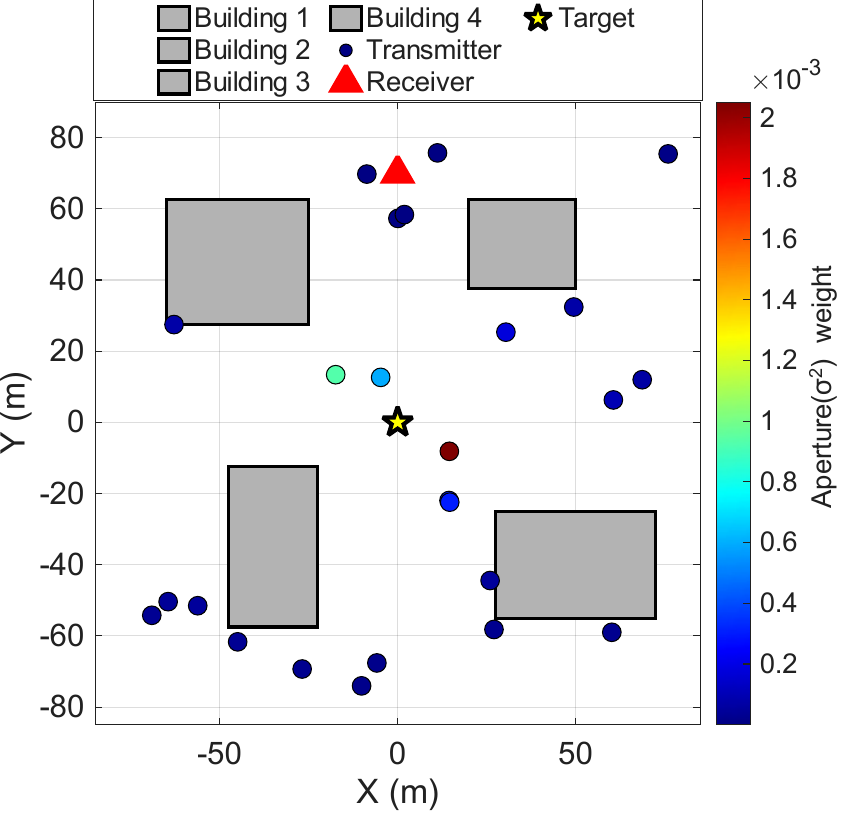}
        \label{figure142}
    }
    \hfill
    \subfloat[]{

        \includegraphics[width=6.8cm]{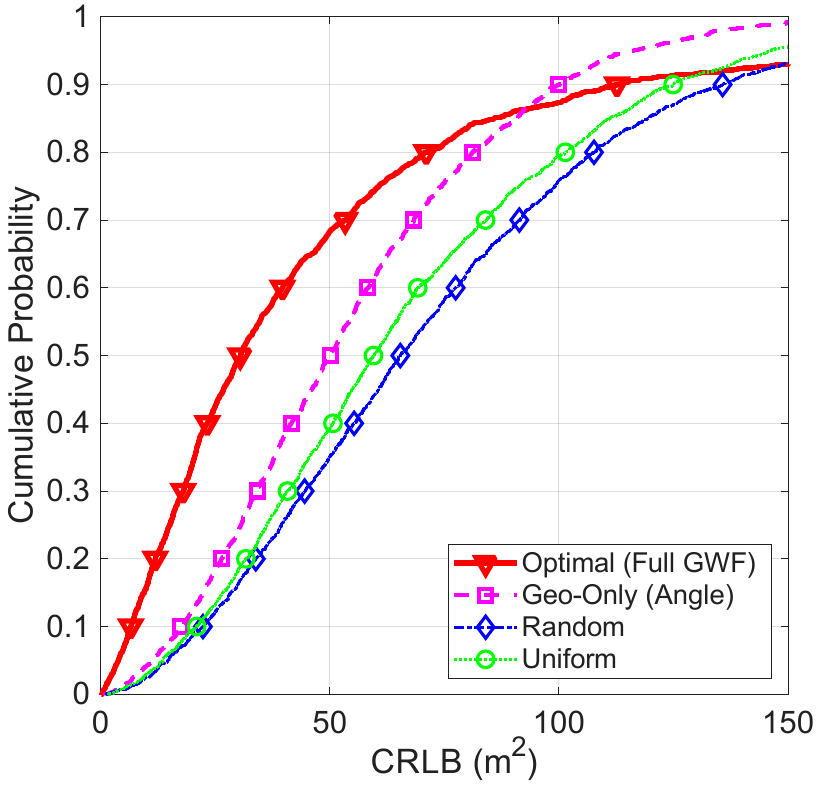}
        \label{figure152}
    }
    \caption{Performance in an asymmetric urban street canyon scenario. (a) Simulation setup where UEs are physically constrained to streets by buildings (grey blocks). (b) CDF of Sensing CRLB, highlighting the performance degradation of the Uniform strategy due to resource dilution in geometric blind zones.}
    \label{figure_combined-1}
\end{figure}

In Fig. \ref{figure141}, we verify the adaptability of the ``geometric water-filling" principle in a a non-typical joint angle-distance distribution involving two Gaussian hotspots, as shown . Hotspot 1 is located at the orthogonal position relative to the Receiver-Target axis, offering high geometric gain, while Hotspot 2 is located in the backscatter region (collinear), offering negligible angular information. The visualization reveals that the proposed algorithm functions as a ``geometric filter": it assigns high aperture weights (represented by red/yellow colors) to UEs in Hotspot 1 while suppressing UEs in Hotspot 2 (blue colors). Fig. \ref{figure151} confirms that the Optimal scheme achieves the lowest CRLB, whereas the Uniform and Random schemes suffer performance losses by wasting resources on the geometrically inefficient Hotspot 2.

In {Fig. \ref{figure_combined-1}}, we evaluate the proposed framework in a realistic asymmetric urban street canyon environment, where the joint distribution of angle and distance is physically constrained by buildings, as illustrated in {Fig. \ref{figure142}}. The sensing CRLB results in {Fig. \ref{figure152}} highlight a critical performance hierarchy. The {Random} allocation yields the worst performance due to the sparsity of effective nodes in street canyons, where stochastic allocation often fails to capture critical geometric positions. The {Uniform} strategy ranks third; while providing a ``safety net'' against worst-case random assignments, it inefficiently dilutes resources into geometric blind zones (e.g., vertical streets). Notably, the {Geo-Only} strategy significantly outperforms both geometry-agnostic baselines, effectively bridging the majority of the performance gap between the baselines and the {Optimal} scheme. This indicates that while path loss information contributes to the final precision gain (as evidenced by the gap between Optimal and Geo-Only), prioritizing favorable angular topologies is the fundamental prerequisite for avoiding severe performance degradation in constrained environments. Consequently, by jointly exploiting both geometric sensitivity and channel gain, the proposed GWF principle provides the most robust solution. Crucially, these results validate that the proposed GWF framework remains robust even under realistic constraints where angle and distance distributions are highly coupled (e.g., street canyons), successfully generalizing beyond the independent distribution assumptions used in the theoretical derivation.

\section{Conclusion}
This paper has investigated the fundamental coupling between the spatial geometry of distributed transceivers and the allocation of time-frequency resources in network-level ISAC systems. 
First, for the foundational two-transmitter system, we analytically derived the closed-form optimal resource allocation to minimize the CRLB. 
Second, extending to the general multi-transmitter scenario, we analytically derived the optimal aperture allocation and revealed a fundamental \textit{geometric water-filling} principle. This principle dictates that resources should be preferentially allocated to nodes with superior sensing geometry, providing a clear physical guideline for system design.
Third, to bridge the gap to practical implementation, we formulated the subcarrier assignment with constraints on communication rate as an NP-hard combinatorial problem and proposed a low-complexity Variance-Guided Partitioning Algorithm (VGPA) that yields near-optimal solutions.
Numerical results consistently corroborate the theoretical findings.


\appendices
\section{Proof of Proposition 1}
Since the optimization objective is convex,  (\ref{equ10-2}) can be rewritten in a simplified form
\begin{equation}
\begin{aligned}
\min_{\sigma_1^2, \sigma_2^2} \quad & \frac{\Gamma_1}{\sigma_1^2} + \frac{\Gamma_2}{\sigma_2^2} \\
\text{s.t.} \quad & N_1 \sigma_1^2 + N_2 \sigma_2^2 = E_{\text{total}}.
\end{aligned}
\end{equation}
Then we apply the Lagrange multiplier method and the Lagrangian is
\begin{equation}
\mathcal{L} = \frac{\Gamma_1}{\sigma_1^2} + \frac{\Gamma_2}{\sigma_2^2} + \lambda (N_1 \sigma_1^2 + N_2 \sigma_2^2 - E_{\text{total}}).
\end{equation}
Taking the partial derivatives with respect to $\sigma_1^2$ and $\sigma_2^2$ and setting them to zero, we have
\begin{align}
\frac{\partial \mathcal{L}}{\partial \sigma_1^2} &= -\frac{\Gamma_1}{(\sigma_1^2)^2} + \lambda N_1 = 0 \Rightarrow \sigma_1^2 = \sqrt{\frac{\Gamma_1}{\lambda N_1}}, \label{eq:dev1} \\
\frac{\partial \mathcal{L}}{\partial \sigma_2^2} &= -\frac{\Gamma_2}{(\sigma_2^2)^2} + \lambda N_2 = 0 \Rightarrow \sigma_2^2 = \sqrt{\frac{\Gamma_2}{\lambda N_2}}. \label{eq:dev2}
\end{align}
From (\ref{eq:dev1}) and (\ref{eq:dev2}), the optimal ratio between the apertures is
\begin{equation}
\frac{\sigma_1^2}{\sigma_2^2} = \sqrt{\frac{\Gamma_1 N_2}{\Gamma_2 N_1}}.
\end{equation}
Substituting $\sigma_2^2 = \sigma_1^2 \sqrt{\frac{\Gamma_2 N_1}{\Gamma_1 N_2}}$ into the total energy constraint
\begin{equation}
N_1 \sigma_1^2 + N_2 \left( \sigma_1^2 \sqrt{\frac{\Gamma_2 N_1}{\Gamma_1 N_2}} \right) = E_{\text{total}}.
\end{equation}
Solving for $\sigma_1^2$, we obtain
\begin{equation}
\sigma_1^2 \left( N_1 + \sqrt{N_1 N_2 \frac{\Gamma_2}{\Gamma_1}} \right) = E_{\text{total}},
\end{equation}
which leads to the closed-form solution in (\ref{opt_sig1}). The derivation for $\sigma_2^2$ follows symmetrically. \qed

\section{Proof of Proposition \ref{pro4}}\label{proof2}

The ordered integer set {$\mathcal{N} = \{1, 2, \dots, N\}$} is to be partitioned into $K$ disjoint subsets {$\{\mathcal{S}_k\}_{k=1}^K$}, where each subset {$\mathcal{S}_k$} contains $N_k$ elements such that $\sum_{k=1}^K N_k = N$. Moreover, each subset is intended to achive a target variance $\alpha_k$.

During the initialization, we assume that the elements of {$\mathcal{S}_k$} are selected at equal intervals $\eta_k$.
Accordingly, the elements in {$\mathcal{S}_k$} can be expressed as
\begin{equation}
{\mathcal{S}_k} = \left\{ x_k + i \cdot \eta_k \,\big|\, i = 0, 1, \dots, N_k - 1 \right\},
\end{equation}
where $x_k$ denotes the starting index of subset {$\mathcal{S}_k$} within the ordered set.

Therefore, the variance of the sequence is given by
\begin{equation}
\begin{aligned}
\text{Var}({\mathcal{S}_k}) &= [0+\eta_k+\cdots+(N_K-1)\eta_k]^2/N_k\\ &=\frac{(N_k - 1)(2N_k - 1)}{6} \cdot \eta_k^2.
\end{aligned}
\end{equation}
Obviously, we have $\eta_k\propto \sqrt{\text{Var}({\mathcal{S}_k})}$. This thus completes the proof.

\section{Proof of Equation (\ref{equ:38})}\label{proof3}

According to \cite{scheffe1999analysis}, the variance of a complete set can be decomposed into the sum of the within-subset variance and the between-subset variance 
\begin{equation}
\sum\limits_{n=1}^{N}\!(n-\bar{n})^2\!\!=\!\sum\limits_{k=1}^{K}\sum\limits_{i=1}^{N_k}(n_{k,i}-\bar{n}_k)^2+\sum\limits_{k=1}^{K}\!N_k(\bar{n}_k-\bar{n})^2,
\end{equation} %
where $\bar{n}$ is the mean of all subcarrier indices, $n_{k,i}$ is the $i$th subcarrier index of the $k$th UE, and $\bar{n}_k$ is the mean of the 
$k$th subset.

To optimize sensing performance, the subset means should be aligned as closely as possible. For simplicity, we set $\bar{n}_k={\bar n}$ for $k=1,\cdots,K$, which eliminates the between-subset variance and yields
\begin{equation}\label{equ56}
\sum\limits_{n=1}^{N}\!(n-\bar{n})^2\!\!=\!\sum\limits_{k=1}^{K}\sum\limits_{i=1}^{N_k}(n_{k,i}-\bar{n}_k)^2=\sum\limits_{k=1}^{K}N_k\sigma_k^2=N\sigma_\textrm{total}^2,
\end{equation}

Given the target variance ratio vector  $[\alpha_1,\cdots,\alpha_K]$ across subsets (\ref{equ56}) can be rewritten as
\begin{equation}
N\sigma_\textrm{total}^2=\sum\limits_{k=1}^{K}N_k\alpha_k\sigma_1^2/\alpha_1,
\end{equation}
from which the variance of any target subset can be obtained as
\begin{equation}
\sigma_k^2=\frac{N\alpha_k\sigma_\textrm{total}^2}{\sum_{k=1}^{K}N_k\alpha_k}.
\end{equation}

This completes the proof.


\bibliographystyle{IEEEtran}
\bibliography{IEEEabrv,reference.bib}

@ARTICLE{10918629,
  author={Valiulahi, Iman and Masouros, Christos and Alaaeldin, Mahmoud and Alsusa, Emad},
  journal={IEEE Open Journal of Vehicular Technology}, 
  title={{ISAC} Receiver Design: Joint DoA and Data Estimation in the Presence of Incomplete Signal Observations}, 
  year={2025},
  volume={6},
  number={},
  pages={846-852}
}

@misc{valiulahi,
      title={{ISAC} Super-Resolution Receiver via Lifted Atomic Norm Minimization}, 
      author={Iman Valiulahi and Christos Masouros and Athina P. Petropulu},
      year={2024},
      eprint={2411.09495},
      archivePrefix={arXiv},
      primaryClass={cs.IT},
      url={https://arxiv.org/abs/2411.09495}, 
}

@ARTICLE{10726912,
  author={Meng, Kaitao and Masouros, Christos and Petropulu, Athina P. and Hanzo, Lajos},
  journal={IEEE Wireless Communications}, 
  title={Cooperative {ISAC} Networks: Opportunities and Challenges}, 
  year={2025},
  month={Jun.},
  volume={32},
  number={3},
  pages={212-219}
}

@ARTICLE{11098638,
  author={Song, Yuxuan and Zeng, Yong and Yang, Yuhang and Ren, Zixiang and Cheng, Gaoyuan and Xu, Xiaoli and Xu, Jie and Jin, Shi and Zhang, Rui},
  journal={IEEE Communications Magazine}, 
  title={An Overview of Cellular {ISAC} for Low-Altitude {UAV}: New Opportunities and Challenges}, 
  year={2025, early access},
  month={Jul.},
  volume={},
  number={},
  pages={1-8}
}

@ARTICLE{11205178,
  author={Duan, Yu-Ru and Yang, Shaoshi and Zhai, Hou-Yu and Wang, Xiao-Yang and Tan, Jing-Sheng and Luo, Yu-Song and Chen, Sheng},
  journal={IEEE Wireless Communications Letters}, 
  title={{WiIID}: {Wi-Fi} Based Intelligent Indoor Intrusion Detection With Tensor Decomposition}, 
  year={2025, early access},
month={Oct.},
  volume={},
  number={},
  pages={1-1}
}

@ARTICLE{11184506,
  author={Han, Kawon and Meng, Kaitao and Wang, Xiao-Yang and Masouros, Christos},
  journal={IEEE Journal of Selected Topics in Electromagnetics, Antennas and Propagation}, 
  title={Network-Level {ISAC} Design: State-of-the-Art, Challenges, and Opportunities}, 
  year={2025},
month={Sep.},
  volume={1},
  number={1},
  pages={65-83}
}

@ARTICLE{4767596,
  author={Geman, Stuart and Geman, Donald},
  journal={IEEE Transactions on Pattern Analysis and Machine Intelligence}, 
  title={Stochastic Relaxation, Gibbs Distributions, and the Bayesian Restoration of Images}, 
  year={1984},
  month={Nov},
  volume={PAMI-6},
  number={6},
  pages={721-741},
}

@ARTICLE{meng2024,
  author={Meng, Kaitao and Masouros, Christos and Petropulu, Athina P. and Hanzo, Lajos},
  journal={IEEE Transactions on Wireless Communications}, 
  title={Cooperative {ISAC} Networks: Performance Analysis, Scaling Laws, and Optimization}, 
  year={2025},
month={Feb.},
  volume={24},
  number={2},
  pages={877-892}
}

@ARTICLE{10091198,
	author={Wang, Xiao-Yang and Yang, Shaoshi and Yuan, Tian-Hao and Zhai, Hou-Yu and Zhang, Jianhua and Hanzo, Lajos},
	journal={IEEE Transactions on Vehicular Technology}, 
	title={High-Performance Low-Complexity Hierarchical Frequency Synchronization for Distributed Massive {MIMO-OFDMA} Systems}, 
	year={2023},
	month={Sep.},
	volume={72},
	number={9},
	pages={12343-12348},
	publisher={IEEE}
}

@techreport{3gpp.38.214, author = {{3GPP}}, institution = {{3rd Generation Partnership Project}}, month = {Dec.},  number = {38.214 {V}18.1.0}, title = {{Physical layer procedures for data}}, type={TS}, year = {2023}
}

@INPROCEEDINGS{9062788,
	author={Ozkaptan, Ceyhun D. and Ekici, Eylem and Altintas, Onur},
	booktitle={2019 IEEE Vehicular Networking Conference (VNC)}, 
	title={Demo: A Software-Defined {OFDM} Radar for Joint Automotive Radar and Communication Systems}, 
	year={2019},
	month={Dec.},
	pages={1-2},
	address={Los Angeles, USA}
}

@INPROCEEDINGS{8628347,
	author={Ozkaptan, Ceyhun D. and Ekici, Eylem and Altintas, Onur and Wang, Chang-Heng},
	booktitle={2018 IEEE Vehicular Networking Conference (VNC)}, 
	title={{OFDM} Pilot-Based Radar for Joint Vehicular Communication and Radar Systems}, 
	year={2018},
	month={Dec.},
	pages={1-8},
	address={Taipei, Taiwan}
}

@ARTICLE{10288116,
	author={Hu, Yanmo and Deng, Weibo and Zhang, J. Andrew and Guo, Y. Jay},
	journal={IEEE Wireless Communications Letters}, 
	title={Resource Optimization for Delay Estimation in Perceptive Mobile Networks}, 
	year={2024},
	month={Jan.},
	volume={13},
	number={1},
	pages={223-227},
	publisher={IEEE}
}

@ARTICLE{10103813,
	author={Huang, Yixuan and Yang, Jie and Tang, Wankai and Wen, Chao-Kai and Xia, Shuqiang and Jin, Shi},
	journal={IEEE Transactions on Wireless Communications}, 
	title={Joint Localization and Environment Sensing by Harnessing {NLOS} Components in {RIS}-Aided mmWave Communication Systems}, 
	year={2023},
	month={Dec.},
	volume={22},
	number={12},
	pages={8797-8813},
	publisher={IEEE}
}

@ARTICLE{10557620,
  author={Tagliaferri, Dario and Manzoni, Marco and Mizmizi, Marouan and Tebaldini, Stefano and Virgilio Monti-Guarnieri, Andrea and Maria Prati, Claudio and Spagnolini, Umberto},
  journal={IEEE Journal on Selected Areas in Communications}, 
  title={Cooperative Coherent Multistatic Imaging and Phase Synchronization in Networked Sensing}, 
  year={2024},
month={Oct.},
  volume={42},
  number={10},
  pages={2905-2921}

}

@ARTICLE{9534682,
	author={Tong, Xin and Zhang, Zhaoyang and Wang, Jue and Huang, Chongwen and Debbah, Mérouane},
	journal={IEEE Journal of Selected Topics in Signal Processing}, 
	title={Joint Multi-User Communication and Sensing Exploiting Both Signal and Environment Sparsity}, 
	year={2021},
	month={Nov.},
	volume={15},
	number={6},
	pages={1409-1422},
	publisher={IEEE}
}

@ARTICLE{10640151,
  author={Wang, Xiao-Yang and Yang, Shaoshi and Zhai, Hou-Yu and Masouros, Christos and Andrew Zhang, J.},
  journal={IEEE Transactions on Communications}, 
  title={Windowing Optimization for Fingerprint-Spectrum-Based Passive Sensing in Perceptive Mobile Networks}, 
  year={2025},
  month={Feb.},
  volume={73},
  number={2},
  pages={1367-1382},
  publisher={IEEE}
  }

@ARTICLE{11075613,
  author={Zhai, Hou-Yu and Yang, Shaoshi and Yuan, Tian-Hao and Wang, Xiao-Yang and Tan, Jing-Sheng and Chen, Sheng},
  journal={IEEE Transactions on Vehicular Technology}, 
  title={Hybrid {CSI}-Based Hierarchical Beamforming for Flexible Duplex {MIMO} Systems}, 
  year={2025, early access},
month={Jul.},
  volume={},
  number={},
  pages={1-6}
}

@ARTICLE{wxy,
	author={Wang, Xiao-Yang and Yang, Shaoshi and Zhang, Jianhua and Masouros, Christos and Zhang, Ping},
	journal={IEEE Journal on Selected Areas in Communications}, 
	title={Clutter Suppression, Parameter Association,
	and Time-Frequency Synchronization for
	Ranging and Velocity Estimation in Perceptive
	Vehicular Networks}, 
	year={2024, accepted},
	month={Apr.},
	publisher={IEEE}
	
}

@book{scheffe1999analysis,
	title={The analysis of variance},
	author={Scheffe Henry},
	volume={72},
	year={1999},
	publisher={John Wiley \& Sons}
}

@ARTICLE{5466526,
  author={Godrich, Hana and Haimovich, Alexander M. and Blum, Rick S.},
  journal={IEEE Transactions on Information Theory}, 
  title={Target Localization Accuracy Gain in {MIMO} Radar-Based Systems}, 
  year={2010},
month={Jun.},
  volume={56},
  number={6},
  pages={2783-2803}
}

@book{bertsekas2014constrained,
  title={Constrained optimization and Lagrange multiplier methods},
  author={Bertsekas, Dimitri P},
  year={2014},
  publisher={Academic press}
}

@ARTICLE{9364752,
  author={Sadeghi, Mohammad and Behnia, Fereidoon and Amiri, Rouhollah and Farina, Alfonso},
  journal={IEEE Transactions on Signal Processing}, 
  title={Target Localization Geometry Gain in Distributed {MIMO} Radar}, 
  year={2021},
month={Feb.},
  volume={69},
  number={},
  pages={1642-1652},
}

@ARTICLE{meng2024network,
  author={Meng, Kaitao and Han, Kawon and Masouros, Christos and Hanzo, Lajos},
  journal={IEEE Transactions on Wireless Communications}, 
  title={Network-level {ISAC}: An Analytical Study of Antenna Topologies Ranging from Massive to Cell-Free {MIMO}}, 
  year={2025, early access},
  volume={},
  number={},
}

@ARTICLE{10663785,
  author={Xu, Dongfang and Xu, Yiming and Zhang, Xin and Yu, Xianghao and Song, Shenghui and Schober, Robert},
  journal={IEEE Communications Magazine}, 
  title={Interference Mitigation for Network-Level {ISAC}: An Optimization Perspective}, 
  year={2024},
month={Sep.},
  volume={62},
  number={9},
  pages={28-34},

}

@ARTICLE{10614082,
  author={Wei, Zhiqing and Liu, Haotian and Feng, Zhiyong and Wu, Huici and Liu, Fan and Zhang, Qixun and Du, Yucong},
  journal={IEEE Internet of Things Magazine}, 
  title={Deep Cooperation in {ISAC} System: Resource, Node and Infrastructure Perspectives}, 
  year={2024},
month={Nov.},
  volume={7},
  number={6},
  pages={118-125}
}

@ARTICLE{10735119,
  author={Meng, Kaitao and Masouros, Christos and Chen, Guangji and Liu, Fan},
  journal={IEEE Transactions on Wireless Communications}, 
  title={Network-Level Integrated Sensing and Communication: Interference Management and {BS} Coordination Using Stochastic Geometry}, 
  year={2024},
month={Dec.},
  volume={23},
  number={12},
  pages={19365-19381}
}

@book{boyd2004convex,
  title={Convex optimization},
  author={Boyd, Stephen and Vandenberghe, Lieven},
  year={2004},
  publisher={Cambridge university press}
}

@ARTICLE{10438975,
  author={Cui, Yue and Ding, Haichuan and Zhao, Lian and An, Jianping},
  journal={IEEE Wireless Communications}, 
  title={Integrated Sensing and Communication: A Network Level Perspective}, 
  year={2024},
month={Feb.},
  volume={31},
  number={1},
  pages={103-109}
}

@ARTICLE{10292797,
  author={Zhang, Jianhua and Wang, Jialin and Zhang, Yuxiang and Liu, Yameng and Chai, Zeyong and Liu, Guangyi and Jiang, Tao},
  journal={IEEE Communications Magazine}, 
  title={Integrated Sensing and Communication Channel: Measurements, Characteristics, and Modeling}, 
  year={2024},
month={Jun.},
  volume={62},
  number={6},
  pages={98-104}

}

@article{wei2022toward,
  title={Toward multi-functional {6G} wireless networks: Integrating sensing, communication, and security},
  author={Wei, Zhongxiang and Liu, Fan and Masouros, Christos and Su, Nanchi and Petropulu, Athina P},
  journal={IEEE Communications Magazine},
  volume={60},
  number={4},
  pages={65-71},
  year={2022},
month={Apr.},
  publisher={IEEE}
}

@ARTICLE{236507,
  author={Li, J. and Compton, R.T.},
  journal={IEEE Transactions on Signal Processing}, 
  title={Maximum likelihood angle estimation for signals with known waveforms}, 
  year={1993},
month={Sep.},
  volume={41},
  number={9},
  pages={2850-2862}
}

@book{revesz2014laws,
  title={The laws of large numbers},
  author={R{\'e}v{\'e}sz, P{\'a}l},
  volume={4},
  year={2014},
  publisher={Academic Press}
}

@misc{wang2024fundamentaltradeoffstimefrequencyresource,
      title={On the Fundamental Trade-Offs of Time-Frequency Resource Distribution in {OFDMA} {ISAC}}, 
      author={Xiao-Yang Wang and Shaoshi Yang and Kaitao Meng and Hou-Yu Zhai and Christos Masouros},
      year={2024},
      eprint={2407.12628},
      archivePrefix={arXiv},
      primaryClass={eess.SP},
      url={https://arxiv.org/abs/2407.12628}, 
}

@article{rahman2019framework,
	title={Framework for a perceptive mobile network using joint communication and radar sensing},
	author={Rahman, Md Lushanur and Zhang, J Andrew and Huang, Xiaojing and Guo, Y Jay and Heath, Robert W},
	journal={IEEE Transactions on Aerospace and Electronic Systems},
	volume={56},
	number={3},
	pages={1926-1941},
	year={2019},
	month={Sep.},
	publisher={IEEE}
}

@article{zhang2021enabling,
	title={Enabling joint communication and radar sensing in mobile networks-a survey},
	author={Zhang, J Andrew and Rahman, Md Lushanur and Wu, Kai and Huang, Xiaojing and Guo, Y Jay and Chen, Shanzhi and Yuan, Jinhong},
	journal={IEEE Communications Surveys \& Tutorials},
	volume={24},
	number={1},
	pages={306-345},
	year={2021},
	month={First quarter.},
	publisher={IEEE}
}

@article{liu2020super,
	title={Super-resolution range and velocity estimations with {OFDM} integrated radar and communications waveform},
	author={Liu, Yongjun and Liao, Guisheng and Chen, Yufeng and Xu, Jingwei and Yin, Yingzeng},
	journal={IEEE Transactions on Vehicular Technology},
	volume={69},
	number={10},
	pages={11659-11672},
	year={2020},
	month={Oct.},
	publisher={IEEE}
}

@article{zhang2021overview,
	title={An overview of signal processing techniques for joint communication and radar sensing},
	author={Zhang, J Andrew and Liu, Fan and Masouros, Christos and Heath, Robert W and Feng, Zhiyong and Zheng, Le and Petropulu, Athina},
	journal={IEEE Journal of Selected Topics in Signal Processing},
	year={2021},
	month={Nov.},
	volume={15},
	number={6},
	pages={1295-1315},
	publisher={IEEE}
}

@ARTICLE{8999605,
	author={Liu, Fan and Masouros, Christos and Petropulu, Athina P. and Griffiths, Hugh and Hanzo, Lajos},
	journal={IEEE Transactions on Communications}, 
	title={Joint Radar and Communication Design: Applications, State-of-the-Art, and the Road Ahead}, 
	year={2020},
	volume={68},
	number={6},
	pages={3834-3862},
	month={Jun.},
	publisher={IEEE}
}

\end{document}